\definecolor{lgray}{gray}{0.6}
\definecolor{dgray}{gray}{0.35}
\newtheorem{theorem}{Theorem}
\newtheorem{lemma}[theorem]{Lemma}
\newtheorem{definition}[theorem]{Definition}
\newtheorem*{theorem*}{Theorem}
\begin{document}

\title{Default Ambiguity: Finding the Best Solution to the \\ Clearing Problem\vspace{25pt}}

\author{\Large Pál András Papp \footnotesize \vspace{7pt} \\ ETH Zürich \vspace{2pt} \\ apapp@ethz.ch
\and \and \Large Roger Wattenhofer \footnotesize \vspace{7pt} \\ ETH Zürich \vspace{2pt} \\ wattenhofer@ethz.ch}

\date{}

\maketitle

\vspace{5pt}

\begin{abstract}
We study financial networks with debt contracts and credit default swaps between specific pairs of banks. Given such a financial system, we want to decide which of the banks are in default, and how much of their liabilities can these defaulting banks pay. There can easily be multiple different solutions to this problem, leading to a situation of \textit{default ambiguity}, and a range of possible solutions to implement for a financial authority.
		
In this paper, we study the properties of the solution space of such financial systems, and analyze a wide range of reasonable objective functions for selecting from the set of solutions. Examples of such objective functions include minimizing the number of defaulting banks, minimizing the amount of unpaid debt, maximizing the number of satisfied banks, and many others. We show that for all of these objectives, it is NP-hard to approximate the optimal solution to an $n^{1-\epsilon}$ factor for any $\epsilon>0$, with $n$ denoting the number of banks. Furthermore, we show that this situation is rather difficult to avoid from a financial regulator's perspective: the same hardness results also hold if we apply strong restrictions on the weights of the debts, the structure of the network, or the amount of funds that banks must possess. However, if we restrict both the network structure and the amount of funds simultaneously, then the solution becomes unique, and it can be found efficiently.
\end{abstract}

\vspace{25pt}

\section{Introduction} \label{sec:Intro}

Financial systems are often called ``highly complex'', suggesting that relations and contracts between different financial institutions such as banks form a networked system that is basically impossible to understand. In order to model this phenomenon, there is a recent line of work that aims to describe this complexity in terms of computational complexity.

At the core of understanding financial systems is the so-called \textit{clearing problem}: given a system of banks with (conditional or unconditional) debt contracts between specific banks, we need to decide which of the banks are in default due to these debts, and how much of their liabilities can these defaulting banks pay. This is a fundamental problem in a financial system, and an essential task for a financial regulator after a shock, with the 2008 financial crisis as a recent example.

Earlier results show that the clearing problem is computationally easy if all contracts between the banks are unconditional debts, or more generally, if the contracts in the network represent \textit{``long'' positions}; that is, a better outcome for one bank ensures a better (or the same) outcome for other banks. However, this is not always the case in practice: banks often have \textit{``short'' positions} on each other, when it is more favorable for a bank if another bank is in a worse situation. Typical short positions are credit default swaps (CDSs), short-selling options and other types of derivatives.

This suggests that a realistic analysis of financial systems requires a model that can capture both long and short positions. However, with both long and short positions in the network, financial systems exhibit significantly richer behavior: we can easily have situations of \textit{default ambiguity} when there are multiple solutions in the system, and none of these solutions is obviously superior to the others in terms of clearing.

In practice, a clearing authority has to make a choice among these different solutions of the system, yielding an outcome that is more favorable to some banks and less favorable to others. In this paper, we focus on such cases of default ambiguity; we study the different solutions of the system, and various criteria that the authority could apply to evaluate these solutions and select one of them to implement.

We begin with some fundamental observations about the solution space of financial systems. We then introduce a wide range of problems that aim to find the best solution in the system according to a specific objective function. These include finding e.g. the solution with the smallest number of defaults, the solution which is preferred by the largest number of banks, the best solution for a specific bank, and many others. 

Our first main contribution is negative, showing that all these problems are not only NP-hard to solve, but also NP-hard to approximate to any $n^{1-\epsilon}$ factor (for any $\epsilon>0$). This shows that even if the clearing authority has a well-defined objective to select among the solutions, finding a reasonably good solution is still not viable in practice.

We then study the same problem from a financial regulator's perspective, showing that it is rather difficult to come up with restrictions on the network to prevent this situation. In particular, we show that the same hardness results still hold in many restricted variants of the model: with unit-weight contracts, with severe restrictions on the network structure, and also if we require banks to own a positive amount of funds.

However, on the positive side, we also show that if we restrict both the network structure and the funds of banks simultaneously, then the resulting financial networks have a unique solution, and this solution can be found efficiently.

\section{Related Work}

The fundamental model of financial systems was introduced by Eisenberg and Noe \cite{model1}, which only assumes simple debt contracts between the banks. Following works have also extended this model by e.g. default costs \cite{model2}, cross-ownership relations \cite{cross2,cross1} or so-called covered CDSs \cite{coveredCDS}. However, these model variants can only describe long positions in a network. This means that there is always a \textit{maximal} solution in the system that is simultaneously the best for all banks, and thus the clearing problem is not particularly interesting in this setting.

In contrast to this, the recent work of Schuldenzucker \textit{et al.} \cite{base1, base2} introduces a model which also allows CDSs in the network, i.e. conditional debt contracts where the payment obligation depends on the default of a specific third bank. While a CDS is still a very simple contract, it already allows us to capture short positions in the network. Moreover, CDSs are a prominent kind of derivative in real-world financial systems that also played a major role in the 2008 financial crisis \cite{CDS3}.

We use this model of Schuldenzucker \textit{et al.} as the base model for our findings. With both debts and CDSs, the clearing problem suddenly becomes significantly more challenging. The work of \cite{base1, base2} mostly focuses on the existence of a solution in this model, and the complexity of finding an arbitrary solution; we summarize these results in Section \ref{sec:basics}.

However, in the general case, these financial networks do not have a maximal solution, and thus an authority has to select from a set of solutions that represent a trade-off between the interests of different banks. The work of \cite{base1, base2} does not study this situation, describing it as unwanted since it is prone to the lobbying activity of banks in the system.
Our work analyzes the clearing problem in this general case; to our knowledge, the problem has not been studied from this perspective before.

In general, there are many previous works that study the propagation of shocks in financial networks, and its dependence on the connectivity of the network \cite{prop1, prop4,prop2, prop6}. There are also several results that study the topic from a computational complexity perspective; however, they mostly assume a simple debt-only model, and focus on more complex questions, such as sensitivity to shocks or bailout policies \cite{Beni,otherComplex, Kleinberg, swapping}. Other works introduce more substantial changes into these models, e.g. time-dependent clearing mechanisms \cite{seqclear2, itcs} or game-theoretic aspects \cite{gametheo, icalp}.

There is also a wide literature on different financial derivatives, and CDSs in particular \cite{CDS4, CDS3, CDS2}. On the more practical side, the clearing problem also plays a central role in stress tests to evaluate the sensitivity of financial systems, e.g. in the European Central Bank's stress test framework \cite{ECB}.

\section{Model definition} \label{sec:model}

\subsection{Banks and contracts}

A financial network consists of a set of \emph{banks} $B$. Individual banks are mostly denoted by $u$, $v$ or $w$, the number of banks by $n=|B|$. Each bank $v$ has a certain amount of \emph{funds} (in financial terms: external assets) available to the bank, denoted by $e_v$.

We assume that there are contracts for payments between given pairs of banks in the system. Each such contract is between two specific banks $u$ and $v$, and obliges $u$ (the debtor) to pay a specific amount of money (known as the \emph{notional}) to the other bank $v$ (the creditor), either unconditionally or based on a specific condition.

These contracts result in a specific amount of payment obligation for each bank $v$. If $v$ cannot fulfill these obligations, then we say that $v$ is \emph{in default}. In this case, the \emph{recovery rate} of $v$, denoted by $r_v$, is the proportion of liabilities that $v$ is able to pay. Note that $r_v \in [0,1]$, and $v$ is in default exactly if $r_v < 1$.

The model allows two kinds of contracts between banks. Debt contracts (or simply \textit{debts}) oblige bank $u$ to pay a specific amount to $v$ unconditionally, i.e. in any case. On the other hand, we also allow \emph{credit default swaps} (\emph{CDSs}) between $u$ and $v$ in reference to a third bank $w$. A CDS represents a conditional debt that obliges $u$ to pay a specific amount to $v$ only in case if bank $w$ is in default. More specifically, if the weight of the CDS is $\delta$ and the recovery rate of bank $w$ is $r_w$, then the CDS incurs a payment obligation of $\delta \cdot (1-r_w)$ from node $u$ to $v$. In practice, CDSs are often used as an insurance policy against the default of the debtors of the bank, or as a speculative bet based on insights into the market.

Before a formal definition, let us consider the example in Figure \ref{fig:example}. In this system, bank $u$ has a total liability of $4$ due to the $2$ outgoing debts, but it only has funds of $2$; hence it is in default, and its recovery rate is $r_u=\frac{2}{4}=\frac{1}{2}$. The model assumes that in this case, it makes payments \emph{proportionally to the respective liabilities} in the contracts; thus it transfers 1 unit of money to $w$ and 1 unit to $v$.

Since $u$ has a recovery rate of $r_u=\frac{1}{2}$, the CDS from $w$ to $v$ translates to a liability of $2 \cdot (1-r_u) = 1$. Although $w$ has no funds, it receives 1 unit of money from $u$, so it can fulfill this payment obligation and narrowly avoids default, $r_w=1$.

Finally, $v$ has no liabilities at all, so $r_v=1$. Since it receives 1 unit of money from both $u$ and $w$, and has $e_v=1$, it has 3 units of money after the clearing of the system.

\subsection{Assets and liabilities}

Formally, our systems are defined by a vector $e=(e_v)_{v \in B}$, the matrix $D=(\delta_{u,v})_{u,v \in B}$, where $\delta_{u,v}$ denotes the weight of debt from $u$ to $v$ (interpreted as $\delta_{u,v}=0$ if there is no such debt), and the matrix $C=(\delta_{u,v}^w)_{u,v,w \in B}$, where $\delta_{u,v}^w$ denotes the weight of the CDS from $u$ to $v$ in reference to $w$. We assume that no bank enters into a contract with itself or in reference to itself. Given a financial system on $B$ by $(e, D, C)$, we are interested in the recovery rates $r_v$ of banks, which can also be represented as a vector $r=(r_v)_{v \in B}$.

Given a recovery rate vector $r$, the \emph{liability} of $u$ to $v$ is formally defined as
\[ l_{u,v}(r)=\delta_{u,v} + \sum_{w \in B} \delta^w_{u,v} \cdot (1-r_w).\]
The total liability of bank $u$ is $l_u(r) = \sum_{v \in B} \; l_{u,v}(r)$, i.e. the sum of payment obligations for $u$. However, the actual payment from $u$ to $v$ can be lower than $ l_{u,v}(r)$ if $u$ is in default. The model assumes that defaulting banks always use all their assets to pay for liabilities, and they make payments proportionally to the respective liabilities. With a recovery rate of $r_u$, $u$ can pay an $r_u$ portion of each liability, so the payment from $u$ to $v$ is $p_{u,v}(r) = r_u \cdot l_{u,v}(r)$.

On the other hand, the \textit{assets} of $v$ are defined as 
\[ a_v(r)=e_v + \sum_{u \in B} \, p_{u,v}(r). \]
Given the assets and liabilities of $v$, the recovery rate $r_v$ has to satisfy $r_v=1$ if $a_v(r) \geq l_v(r)$ (i.e. if $v$ is not in default), and $r_v=\frac{a_v(r)}{l_v(r)}$ if $a_v(r) < l_v(r)$ (if $v$ is in default). If a vector $r$ is an equilibrium point of these equations, i.e. it satisfies this condition on $a_v(r)$ and $l_v(r)$ for every bank $v$, then $r$ is a \emph{clearing vector} of the system. Our main goal is to analyze the different clearing vectors.

The \emph{equity} of $v$ in a solution is defined as 
\[ q_v(r)=\max\left(a_v(r)-l_v(r)^{\,},^{\,} 0\right) ^{\,} , \]
i.e. the amount of money available to $v$ after clearing. In the example of Figure \ref{fig:example}, we have $q_u=0$, $q_w=0$ and $q_v=3$. We assume that the main goal of banks is to maximize their equity. Note that we have written $q_u$ instead of $q_u(r)$ in order to simplify notation; we often do not show the dependence on $r$ when $r$ is clear from the context.

Previous works also consider an extension of this base model with \emph{default costs} \cite{base1, base2, model2}; we also refer to this setting as systems with loss. In this case, the financial network has two more parameters $\alpha, \beta \in [0,1]$, and when a bank goes into default, it loses a specific fraction of its assets. More specifically, if $v$ is in default, then its assets are defined as
\[ a_v(r) = \alpha \cdot e_v + \beta \cdot \sum_{u \in B} p_{u,v}(r) .\]
Throughout the paper, we mostly focus on the base model without loss, i.e. we always assume $\alpha\!=\!\beta\!=\!1$ unless specified otherwise. However, we discuss the extension of our proofs to systems with loss in Appendix \ref{App:B}, and we also briefly study some questions that only arise in case of default costs.

In the rest of the paper, we switch to a computer science terminology: we refer to banks in the system as \textit{nodes}, clearing vectors as \textit{solutions} (with the set of solutions denoted by $S$), and the notionals of contracts as the \textit{weight} of the contracts.

\begin{figure}
\minipage{0.46\textwidth}
\centering
	\vspace{-2pt}

\begin{tikzpicture}
	
	\draw[very thick, blue, arrows=-latex] (0pt,0pt) -- (72pt,0pt);
	\draw[very thick, blue, arrows=-latex] (0pt,0pt) -- (36pt,62.1pt);
	\draw[very thick, brown, arrows=-latex] (40pt,69pt) -- (76pt,6.9pt);
	\draw[very thick, brown, densely dotted] (60pt,34.5pt) -- (0pt,0pt);
	
	\node[anchor=center] at (40pt,-7pt) {\small $2$};
	\node[anchor=center] at (15pt,37pt) {\small $2$};
	\node[anchor=center] at (65pt,37pt) {\small $2$};
	
	\draw[black, fill=white] (0pt,0pt) circle (1.6ex);
	\draw[black, fill=white] (80pt,0pt) circle (1.6ex);
	\draw[black, fill=white] (40pt,69pt) circle (1.6ex);
	
	\node[anchor=center] at (0pt,0pt) {\normalsize $u$};
	\node[anchor=center] at (80pt,0pt) {\normalsize$v$};
	\node[anchor=center] at (40pt,69pt) {\normalsize $w$};
	
	\draw [fill=white] (4pt,-2pt) rectangle (10pt,-11pt);
	\node[anchor=center] at (7pt,-6.5pt) {\footnotesize $2$};
	\draw [fill=white] (84pt,-2pt) rectangle (90pt,-11pt);
	\node[anchor=center] at (87pt,-6.5pt) {\footnotesize $1$};
	\draw [fill=white] (44pt,67pt) rectangle (50pt,58pt);
	\node[anchor=center] at (47pt,62.5pt) {\footnotesize $0$};
	
\end{tikzpicture}
	\vspace{-3pt}
	\caption{Example system on $3$ banks. External assets are shown in rectangles besides the bank, simple debts are shown as blue arrows, and CDSs are shown as brown arrows with a dotted line to the reference bank.}
	\label{fig:example}
\endminipage\hfill
\hspace{0.07\textwidth}
\minipage{0.46\textwidth}	
\centering
	\vspace{3pt}

\begin{tikzpicture}
	
	\draw[very thick, blue, arrows=-latex] (50pt,25pt) -- (95pt,2.5pt);
	\draw[very thick, blue, arrows=-latex] (50pt,-25pt) -- (95pt,-2.5pt);
	\draw[very thick, brown, arrows=-latex] (0pt,0pt) -- (45pt,22.5pt);
	\draw[very thick, brown, densely dotted] (25pt,12.5pt) -- (50pt,-25pt);
	\draw[very thick, brown, arrows=-latex] (0pt,0pt) -- (45pt,-22.5pt);
	\draw[very thick, brown, densely dotted] (25pt,-12.5pt) -- (50pt,25pt);
	
	\node[anchor=center] at (22pt,18.8pt) {\small $\delta_x$};
	\node[anchor=center] at (22pt,-17.8pt) {\small $\delta_y$};
	\node[anchor=center] at (78pt,17.7pt) {\small $1$};
	\node[anchor=center] at (78pt,-17.7pt) {\small $1$};
	
	\draw[black, fill=white] (0pt,0pt) circle (1.6ex);
	\draw[black, fill=white] (50pt,25pt) circle (1.6ex);
	\draw[black, fill=white] (50pt,-25pt) circle (1.6ex);
	\draw[black, fill=white] (100pt,0pt) circle (1.6ex);
	
	\node[anchor=center] at (0pt,0pt) {\normalsize $s$};
	\node[anchor=center] at (50pt,25pt) {\normalsize $x$};
	\node[anchor=center] at (50pt,-25pt) {\normalsize $y$};
	\node[anchor=center] at (100pt,0pt) {\normalsize $t$};
	
	\draw [fill=white] (4pt,-3pt) rectangle (13pt,-10pt);
	\node[anchor=center] at (8.5pt,-6.5pt) {\scriptsize $\infty$};
	\draw [fill=white] (104.5pt,-2pt) rectangle (110.5pt,-11pt);
	\node[anchor=center] at (107.5pt,-6.5pt) {\footnotesize $0$};
	\draw [fill=white] (54.5pt,23pt) rectangle (60.5pt,14pt);
	\node[anchor=center] at (57.5pt,18.5pt) {\footnotesize $0$};
	\draw [fill=white] (54.5pt,-27pt) rectangle (60.5pt,-36pt);
	\node[anchor=center] at (57.5pt,-31.5pt) {\footnotesize $0$};
	
\end{tikzpicture}
	\vspace{-2pt}
	\caption{Branching gadget consisting of two nodes $x$ and $y$, both having an outgoing debt to a sink $t$ and an incoming CDS from a source $s$.}
	\label{fig:branch}
\endminipage\hfill
\end{figure}

\section{Properties of the solution space} \label{sec:basics}

\paragraph{Previous work.} The work of Schuldenzucker \textit{et al.} mostly focuses on the existence and computability of solutions \cite{base1, base2}. Their main results can be summarized as follows:

\begin{itemize}[topsep=2.5pt, itemsep=3pt, parsep=0pt]
\item \textit{Loss-free systems} ($\alpha=\beta=1$): in this case, there always exists a solution. However, the proof is non-constructive; finding an (approximate) solution is PPAD-hard.
\item \textit{Systems with loss} ($\alpha^{\!}<^{\!}1$ or $\beta^{\!}<^{\!}1$): in this case, a solution might not exist at all. Deciding if a system has an (approximate) solution is an NP-hard problem.
\end{itemize}

Once we know that a solution exists, another natural question is if there exists a \emph{maximal} solution, i.e. a solution $r$ such that $q_v(r) \geq q_v(r')$ for every node $v$ and every solution $r'$. If such a maximal solution exists, then we can assume that a clearing authority always prefers to implement this solution. However, in both settings, a system can easily have multiple solutions with none of them being maximal.

\paragraph{Branching gadget.} A basic building block in our constructions is the \emph{branching gadget} shown in Figure \ref{fig:branch}, which has already been used with some parametrizations in the works of \cite{base1, base2}, e.g. as an example system with no maximal solution. For the weight parameters $\delta_x$ and $\delta_y$ of the gadget, we always assume $\delta_x \geq \delta_y \geq 1$.

Since the source and sink nodes can never go into default, we only analyze the recovery rate subvector $(r_{x}, r_{y})$. First, observe that we cannot have both nodes surviving, i.e. $(1,1)$ as a solution: both nodes only receive any funds if the other node is in default. However, if either $r_x=0$ or $r_y=0$, then the other node can already pay its debt, thus $(0,1)$ and $(1,0)$ are always solutions in this system.

Besides this, there may be other solutions when both nodes are in default with a positive recovery rate; these depend on the concrete values of $\delta_x$ and $\delta_y$. If $x$ and $y$ are in default, then their assets are equal to the amount of debt they can pay, so the remaining solutions are obtained from the equations $r_{x}=\delta_x \cdot (1-r_{y})$ and $r_{y}=\delta_y \cdot (1-r_{x})$.

However, there are also choices of $\delta_x$, $\delta_y$ for which these equations confirm that $(0,1)$ and $(1,0)$ are indeed the only solutions. One such example is $\delta_x=2$, $\delta_y=1$; we refer to this case as the \emph{clean branching gadget}, and we assume this parametrization unless specified otherwise. This gadget variant is a natural candidate for representing a binary choice: $r_{x}$ is either 0 or 1 in any solution, and $r_y$ offers a convenient representation of its negation.

\paragraph{Number of solutions.} Let us now discuss the size of the solution space in our systems.

\begin{lemma}
There exists a financial system with infinitely many solutions.
\end{lemma}

\begin{proof}
Consider the branching gadget of Figure \ref{fig:branch} with $\delta_x=\delta_y=1$. For any $\rho \in [0,1]$, the vector $(\rho, 1\!-\!\rho)$ satisfies the equations above, thus it is a solution of the system.
\end{proof}

While this shows that the number of solutions is potentially unlimited, the difference between most of these vectors is only the extent of the defaults. Thus it is also natural to study another concept of difference between solutions: we say that two solutions $r$ and $r'$ are \emph{essentially different} if there is a node $v$ such that either $r_v=1$ but $r'_v<1$, or $r'_v=1$ but $r_v<1$. Since we only consider a boolean value for each node in this definition, the number of pairwise essentially different solutions is at most $2^n$.

\begin{lemma}
There exists a system with $2^{\Omega(n)}$ solutions that are pairwise essentially different.
\end{lemma}

\begin{proof}
Let us take $\frac{n}{4}$ independent copies of the clean branching gadget. In each gadget, there are two possible subsolutions: (0,1) or (1,0). Over the distinct gadgets, these can be combined in any way, adding up to $2^{n/4}$ solutions that are pairwise essentially different.
\end{proof}

\paragraph{Better and worse solutions.} While financial systems may not always have a maximal solution, it is still reasonable to say that some solutions are better than others.

\begin{definition}
Given two solutions $r$ and $r'$, we say that $r'$ is \emph{strictly better} than $r$ if $q_v(r') \geq q_v(r)$ for every node $v$, and there exists a node $u$ such that $q_u(r')>q_u(r)$. A solution $r$ is \emph{Pareto-optimal} if there is no solution $r'$ that is strictly better than $r$ (otherwise, $r$ is \emph{Pareto-suboptimal}).
\end{definition}

A financial authority might want to avoid implementing Pareto-suboptimal solutions, and prefer a strictly better solution instead. However, selecting among Pareto-optimal solutions is more difficult, since they represent a trade-off between the preferences of different nodes.

We first show that in our base financial system model without loss $(\alpha=\beta=1)$, every solution is Pareto-optimal. One can consider this claim as slight generalization of the similar claim in \cite{model1} for debt-only networks, adapted to our more complex network model.

\begin{lemma} \label{ops:pareto}
In loss-free financial systems, every solution is Pareto-optimal.
\end{lemma}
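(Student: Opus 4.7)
The plan is to exploit conservation of total equity in loss-free systems. I will show that $\sum_{v} q_v(r)$ is the same constant (namely $\sum_v e_v$) for every solution $r$, which immediately rules out the existence of any strictly better solution.

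The first step is to verify that, regardless of whether $v$ is in default, the equity satisfies the identity $q_v(r) = a_v(r) - \sum_{u \in B} p_{v,u}(r)$. If $v$ is not in default then $r_v = 1$, so each outgoing payment equals the full liability and the total outflow is $l_v(r)$, giving $q_v(r) = a_v(r) - l_v(r) \geq 0$. If $v$ is in default then $r_v = a_v(r)/l_v(r)$, the total outflow is $r_v \cdot l_v(r) = a_v(r)$, and $q_v(r) = 0$, matching the identity.

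Next I would sum this identity over all $v \in B$. Substituting the definition $a_v(r) = e_v + \sum_u p_{u,v}(r)$ gives
\[
\sum_{v \in B} q_v(r) \;=\; \sum_{v \in B} e_v + \sum_{v,u} p_{u,v}(r) - \sum_{v,u} p_{v,u}(r) \;=\; \sum_{v \in B} e_v,
\]
since the two double sums of payments are equal by relabeling. Crucially, this uses $\alpha = \beta = 1$: in a system with loss, defaulting nodes would absorb a fraction of external assets and incoming payments, breaking the cancellation.

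Finally, suppose for contradiction that $r'$ is strictly better than $r$. Then $q_v(r') \geq q_v(r)$ for all $v$ and $q_u(r') > q_u(r)$ for some $u$, so $\sum_v q_v(r') > \sum_v q_v(r)$. This contradicts the fact that both sums equal $\sum_v e_v$. Hence no such $r'$ exists and $r$ is Pareto-optimal. The only subtlety to handle carefully is the case-split in step one — in particular, when $l_v(r) = 0$ the node is vacuously non-defaulting ($r_v = 1$ by convention) and the identity still holds — but this is routine rather than a real obstacle.
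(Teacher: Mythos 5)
Your proposal is correct and follows essentially the same argument as the paper: establish the identity $q_v = a_v - \sum_u p_{v,u}$ in both the default and non-default cases, sum over all nodes so the payment terms cancel, and conclude that total equity is the invariant $\sum_v e_v$, which forbids any strictly better solution. The extra care you take with the case split and with $l_v(r)=0$ is a minor refinement of the same proof rather than a different route.
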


\begin{proof}
We show that in every solution, $\sum_{v \in B} q_v = \sum_{v \in B} e_v$. Since $\sum_{v \in B} e_v$ is a fixed parameter of the input problem, this already proves the statement.

Recall that in a given solution, $p_v=\sum_{u \in B} p_{v,u}$ and $a_v=e_v + \sum_{u \in B} p_{u,v}$ denotes the payments and assets of node $v$, respectively. Furthermore, the equity of $v$ is always $q_v = a_v - p_v$, regardless of $v$ being in default or not. This implies
\[ \sum_{v \in B} q_v = \sum_{v \in B} (a_v - p_v) = \sum_{v \in B} \left( e_v + \sum_{u \in B} p_{u,v} - \sum_{u \in B} p_{v,u} \right) = \sum_{v \in B} e_v + \sum_{u, v \in B} \left(p_{u,v} - p_{u,v} \right) = \sum_{v \in B} e_v. \]
\end{proof}

However, once we have default costs in the system, then some funds are lost when a node goes into default. Since the total amount of lost funds depends on the number of nodes in default, the funds remaining in the system might differ among different solutions, so some of them might turn out to be Pareto-suboptimal.

\begin{lemma} \label{ops:subopti}
In systems with loss, there can be solutions that are Pareto-suboptimal.
\end{lemma}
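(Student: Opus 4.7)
The plan is to construct a small system with default costs that admits two distinct clearing vectors, one Pareto-dominated by the other. The intuition comes from Lemma~\ref{ops:pareto}: once $\alpha<1$ or $\beta<1$, total equity is no longer conserved, and every defaulting bank destroys a fraction of the money flowing through it. Hence a clearing vector with an ``extra'' default can be strictly worse than one without it, provided no node actually benefits from that extra default.

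I would use the following symmetric construction. Two ``main'' banks $a,b$ each have external assets $1$, and two ``sink'' banks $s_1,s_2$ have no external assets and no outgoing contracts. The debts are $\delta_{a,b}=\delta_{b,a}=1$ and $\delta_{a,s_1}=\delta_{b,s_2}=1$; additionally $a$ has a CDS to $s_1$ referencing $b$ of notional $2$, and symmetrically $b$ has a CDS to $s_2$ referencing $a$ of notional $2$. The vector $r^*$ with $r_a=r_b=1$ is a valid solution: each main bank's assets $1+1$ exactly cover its unconditional debts of total weight $2$, while both CDSs contribute $0$. Substituting $r_a=r_b=r<1$ into the default equation $r=a_v/l_v$ with default costs yields the quadratic $2r^2+(\beta-4)r+\alpha=0$, whose smaller root $r'$ lies in $(0,\tfrac{1}{2}]$ for every $\alpha,\beta\in(0,1]$ (the discriminant $(4-\beta)^2-8\alpha\geq 9-8>0$) and is strictly below $\tfrac{1}{2}$ once $\alpha<1$; so $(r',r')$ is a genuine second solution.

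To compare equities, note that the main banks $a,b$ have equity $0$ in both solutions (by definition while in default, and because assets equal liabilities in $r^*$). Each sink $s_i$ has no liabilities, so its equity simply equals the payment received from its debtor: we obtain $q_{s_i}(r^*)=1$, while $q_{s_i}(r')=r'\cdot(1+2(1-r'))=r'(3-2r')$, which is strictly increasing in $r'$ on $[0,\tfrac{3}{4}]$ and therefore strictly below $1$ whenever $r'<\tfrac{1}{2}$. Hence every node weakly prefers $r^*$ and the two sinks strictly prefer it, making $r'$ Pareto-suboptimal. The main subtlety I would watch out for is routing the CDS payouts so that no node profits from the extra defaults; directing them into inert sink banks is what distinguishes this construction from the branching gadget of Section~\ref{sec:branch}, where the source $s$ always prefers the solution with the smaller CDS outflow and thereby keeps the two branches Pareto-incomparable.
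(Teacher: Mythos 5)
Your construction is correct, and it takes a genuinely different route from the paper. The paper recycles its branching gadget (with $\delta_x=\delta_y=\tfrac{3}{2}$ and $\beta=\tfrac{1}{2}$ fixed), where the two output nodes hold \emph{short} positions on each other, $(1,1)$ is not a solution, and the dominated solution is the symmetric interior fixed point $(\tfrac{3}{7},\tfrac{3}{7})$, which is strictly worse than either asymmetric corner solution. You instead build a mutual-debt cycle in which $a$ and $b$ hold \emph{long} positions on each other, so the all-survive vector $(1,1)$ is itself a solution and it dominates the self-fulfilling joint-default solution $(r',r')$ — essentially the Eisenberg--Noe-with-default-costs multiplicity, decorated with CDSs routed to inert sinks. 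Your checks all go through: $(1,1)$ balances assets and liabilities exactly; the quadratic $2r^2+(\beta-4)r+\alpha=0$ has its smaller root in $(0,\tfrac{1}{2})$ whenever $(\alpha,\beta)\neq(1,1)$ (since $f(0)=\alpha>0$ and $f(\tfrac12)=\alpha+\tfrac{\beta}{2}-\tfrac32<0$, while $f(1)<0$ rules out the larger root); and $q_{s_i}=r'(3-2r')<1$ follows from monotonicity on $[0,\tfrac34]$. Two small remarks: your example in fact covers \emph{all} loss parameters, including $\alpha=1,\beta<1$ (the paper's only covers $\beta=\tfrac12$), which is a mild strengthening; and the CDSs are not actually needed — dropping them still yields the dominated default solution $r=\alpha/(2-\beta)<1$ with the sinks receiving strictly less — though keeping them does no harm and keeps the example in the spirit of the model. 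Your closing remark about why the two corner solutions of the branching gadget are Pareto-incomparable is true but tangential: the paper's proof does not need them to be comparable, only that the interior solution be dominated by one of them.
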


\begin{proof}
Let $\beta=\frac{1}{2}$, $\alpha \in (0,1)$, and let us consider the branching gadget with $\delta_x=\delta_y=\frac{3}{2}$. To avoid confusion, let us now assume that $e_{s}=3$ instead of infinity. For simplicity, we express the recovery rate and equity vectors in the gadget by $(r_{s}, r_{x}, r_{y}, r_{t})$ and $(q_{s}, q_{x}, q_{y}, q_{t})$.

The vectors $(1,0,1,1)$ are $(1,1,0,1)$ are still solutions of this system; these induce equity vectors of $(\frac{3}{2}, 0, \frac{1}{2}, 1)$ and $(\frac{3}{2}, \frac{1}{2}, 0, 1)$, respectively. Any other solution must satisfy $r_{x}=\frac{3}{2} \cdot \beta \cdot (1-r_{y})$ and $r_{y}=\frac{3}{2} \cdot \beta \cdot (1-r_{x})$. Solving this system of equations, we get that the third solution is $(1,\frac{3}{7},\frac{3}{7},1)$, resulting in an equity vector of $(\frac{9}{7}, 0, 0, \frac{6}{7})$.

One can observe that this third solution is strictly worse than the previous two solutions.
\end{proof}

\section{Finding the ``best'' solution}

In this section, we discuss a wide range of realistic objective functions for selecting a solution in out networks. We show that for these objectives, the optimal solution is even hard to reasonably approximate. The details of these proofs are discussed in Appendix \ref{App:A}.

\subsection{Tools and gadgets} \label{sec:tools}

We first provide a quick overview of the gadgets that we use as building blocks in our constructions. Note that most of these gadgets have already been used before in the work of \cite{base1, base2}, sometimes in a slightly different form.

Most nodes in these gadget will have the convenient property that their recovery rate is always either $0$ or $1$. Generally, we will say that $v$ is a \emph{binary node} if $r_v \in \{0,1\}$ in any solution.

\begin{itemize}[topsep=2.5pt, itemsep=4pt, parsep=0pt]
 \item \textbf{Clean branching gadget:} this gadget was already discussed in Section \ref{sec:basics}. Recall that the banks $x$ and $y$ represent a binary state: in every solution, we have either $r_x=1$ and $r_y=0$, or $r_x=0$ and $r_y=1$.
 \item \textbf{Cutoff gadget:} given two parameters $0<\eta_1<\eta_2<1$, this gadget from \cite{base2} takes an input node $v$, and transforms it into a binary node $w$ if $r_v \notin (\eta_1, \eta_2)$, ensuring that $r_w=0$ if $r_v \leq \eta_1$, and $r_w=1$ if $r_v \geq \eta_2$. We will only use cutoff gadgets for adapting our results to some restricted model variants.
 \item \textbf{Logical gates:} we can also develop gadgets that simulate boolean operations on nodes. More specifically, given two binary nodes $u$ and $v$, we can construct the following gadgets:
 \begin{itemize}[topsep=3pt, itemsep=2.5pt, parsep=0pt]
	 \item a \textsc{not} gate, i.e. a node $w$ such that $r_{w}=1$ if $r_{v}=0$, and $r_{w}=0$ if $r_{v}=1$,
	 \item an \textsc{or} gate, i.e. a node $w$ such that $r_{w}=0$ if $r_{u}=r_{v}=0$, and $r_{w}=1$ otherwise,
	 \item an \textsc{and} gate, i.e. a node $w$ such that $r_{w}=1$ if $r_{u}=r_{v}=1$, and $r_{w}=0$ otherwise.
 \end{itemize}
\end{itemize}

We demonstrate the \textsc{not} and \textsc{or} gates in Figures \ref{fig:not} and \ref{fig:or}, and discuss the behavior of these gadgets in Appendix \ref{App:0}. Note that Figure \ref{fig:or} already uses the \textsc{not} gate as black box, denoted by a $\neg$ symbol. In a similar fashion, we can also create \textsc{and} and \textsc{or} gates on multiple inputs.

\begin{figure}
\hspace{0.03\textwidth}
\minipage{0.35\textwidth}
\centering
	\vspace{20pt}

\begin{tikzpicture}
	
	\draw[very thick, blue, arrows=-latex] (70pt,0pt) -- (132.5pt,0pt);
	\draw[very thick, brown, arrows=-latex] (0pt,0pt) -- (62.5pt,0pt);
	\draw[very thick, brown, densely dotted] (35pt,0pt) -- (35pt,50pt);
	
	\node[anchor=center] at (35pt,-7pt) {\footnotesize $1$};
	\node[anchor=center] at (105pt,-7pt) {\footnotesize $1$};
	
	\draw[black, fill=white] (0pt,0pt) circle (1.6ex);
	\draw[black, fill=white] (70pt,0pt) circle (1.6ex);
	\draw[black, fill=white] (140pt,0pt) circle (1.6ex);
	\draw[black, fill=white] (35pt,50pt) circle (1.6ex);
	
	\node[anchor=center] at (0pt,0pt) {\normalsize $s$};
	\node[anchor=center] at (70pt,0pt) {\normalsize $w$};
	\node[anchor=center] at (140pt,0pt) {\normalsize $t$};
	\node[anchor=center] at (35pt,50pt) {\normalsize $v$};
	
	\draw [fill=white] (4pt,-4pt) rectangle (13pt,-11pt);
	\node[anchor=center] at (8.5pt,-7.5pt) {\scriptsize $\infty$};
	\draw [fill=white] (144.5pt,-3pt) rectangle (150.5pt,-12pt);
	\node[anchor=center] at (147.5pt,-7.5pt) {\footnotesize $0$};
	\draw [fill=white] (74.5pt,-3pt) rectangle (80.5pt,-12pt);
	\node[anchor=center] at (77.5pt,-7.5pt) {\footnotesize $0$};
	
\end{tikzpicture}
	\vspace{-13pt}
	\caption{\textsc{not} gate}
	\label{fig:not}
\endminipage\hfill
\hspace{0.12\textwidth}
\minipage{0.4\textwidth}	
\centering

\begin{tikzpicture}
	
	\draw[very thick, blue, arrows=-latex] (90pt,0pt) -- (152.5pt,0pt);
	\draw[very thick, brown, arrows=-latex] (0pt,2pt) -- (82.5pt,2pt);
	\draw[very thick, brown, densely dotted] (30pt,3pt) -- (30pt,35pt);
	\draw[very thick, brown, arrows=-latex] (0pt,-2pt) -- (82.5pt,-2pt);
	\draw[very thick, brown, densely dotted] (60pt,-1pt) -- (60pt,35pt);
	\draw[thick, arrows=-stealth] (30pt,70pt) -- (30pt,50pt);
	\draw[thick, arrows=-stealth] (60pt,70pt) -- (60pt,50pt);
	
	\node[anchor=center] at (45pt,-9pt) {\footnotesize $1$};
	\node[anchor=center] at (45pt,9pt) {\footnotesize $1$};
	\node[anchor=center] at (125pt,-7pt) {\footnotesize $1$};
	
	\draw[black, fill=white] (0pt,0pt) circle (1.6ex);
	\draw[black, fill=white] (90pt,0pt) circle (1.6ex);
	\draw[black, fill=white] (160pt,0pt) circle (1.6ex);
	\draw[black, fill=white] (30pt,70pt) circle (1.6ex);
	\draw[black, fill=white] (60pt,70pt) circle (1.6ex);
	
	\node[anchor=center] at (0pt,0pt) {\normalsize $s$};
	\node[anchor=center] at (90pt,0pt) {\normalsize $w$};
	\node[anchor=center] at (160pt,0pt) {\normalsize $t$};
	\node[anchor=center] at (30pt,70pt) {\normalsize $u$};
	\node[anchor=center] at (60pt,70pt) {\normalsize $v$};
	
	\draw [fill=white] (25pt,49pt) rectangle (35pt,39pt);
	\draw [fill=white] (55pt,49pt) rectangle (65pt,39pt);
	\node[anchor=center] at (30pt,45pt) {\small $\boldsymbol{\neg}$};
	\node[anchor=center] at (60pt,45pt) {\small $\boldsymbol{\neg}$};
	\draw[black, fill=white] (30pt,35pt) circle (1.6ex);
	\draw[black, fill=white] (60pt,35pt) circle (1.6ex);
	
	\draw [fill=white] (4pt,-4pt) rectangle (13pt,-11pt);
	\node[anchor=center] at (8.5pt,-7.5pt) {\scriptsize $\infty$};
	\draw [fill=white] (164.5pt,-3pt) rectangle (170.5pt,-12pt);
	\node[anchor=center] at (167.5pt,-7.5pt) {\footnotesize $0$};
	\draw [fill=white] (94.5pt,-3pt) rectangle (100.5pt,-12pt);
	\node[anchor=center] at (97.5pt,-7.5pt) {\footnotesize $0$};
	
\end{tikzpicture}
	\vspace{-14pt}
	\caption{\textsc{or} gate}
	\label{fig:or}
\endminipage\hfill
\hspace{0.03\textwidth}
\end{figure}

Finally, when adding incoming or outgoing contracts to a bank $v$ in our constructions, our main goal is often to establish a certain behavior for $v$, and thus it is unimportant where these contracts come from or go to. Hence for simplicity, we add a specific \textit{source node} $s$ to our constructions with $e_{s}= \infty$ which is the source of all such incoming contracts, and a specific \textit{sink node} $t$ which is the recipient of all such outgoing contracts.

\subsection{Example: maximizing the equity of a node} \label{sec:equity}

To demonstrate the main idea behind our constructions, we first discuss the problem of maximizing the equity of a specific node. That is, given a node $v$, we define the value of a solution $r$ as the equity $q_v$, and we denote the search problem of finding the highest-value solution by Max\textsc{Equity}$(v)$. This is a very natural problem, and a crucial question for $v$ if it wants to understand its situation in the network.

However, this problem is already hard to solve in our model.

\begin{theorem} \label{th:maxeq}
The problem Max\textsc{Equity}$(v)$ is NP-hard to approximate to any $n^{1-\epsilon}$ factor.
\end{theorem}

\begin{proof}
We use a reduction from the boolean satisfiability (SAT) problem, which is known to be NP-complete \cite{vc}. Given an input boolean formula $\phi$ on $N$ variables and $M$ clauses, we transform this into a financial system representation by creating $N$ distinct branching gadgets, each corresponding to a specific variable. Recall that if we understand $r_x$ to be the value of the variable in an assignment, then $r_y$ represents its negation.

Given these variables, we can use our logical gates to compute the value of $\phi$ for a specific assignment: we first combine each clause into a node with an \textsc{or} gate, and then combine all these nodes with an \textsc{and} gate. This provides a binary indicator node $v_I$ which describes the value of $\phi$ under a specific assignment. We also add a further \textsc{not} gate on top of $v_I$ to obtain a convenient representation of its negation in a new bank $\overline{v_I}$.

Most of our hardness results will use this \textit{base construction}, extended by further gadgets representing the specific objective function. For the Max\textsc{Equity}$(v)$ objective, we only add a node $v$ that has $e_v=0$, and an incoming CDS of weight $n$ in reference to $\overline{v_I}$.

If there exists a satisfying assignment to $\phi$, then there is a solution in this system that has $r_{\overline{v_I}}=0$, and thus $q_v=n$. As such, any $n^{1-\epsilon}$ approximation algorithm must return a solution in this case with $q_v \geq n^{\epsilon} > 0$. On the other hand, if $\phi$ is unsatisfiable, then every solution of the system has $q_v=0$. Hence a polynomial-time approximation would also allow us to decide whether $\phi$ is satisfiable, which completes the reduction. \qedhere
\end{proof}

We point out that the branching gadgets already determine the recovery rate of all other nodes in this system. As such, the system has exactly the $2^N$ solutions that correspond to the different variable assignments. This means that the source of this computational hardness is not the fact that we cannot even find a single solution, as described in \cite{base2} before; in our case, it is not only straightforward to find an (arbitrary) solution in the network, but we can also easily characterize the entire solution space of the system.

We also note that weight of the CDS in the proof was chosen as $n$ in order to demonstrate that inapproximability still holds if we also allow a constant offset besides the $n^{1-\epsilon}$ factor.

With a slightly different gadget appended to the base construction, we can present a similar reduction for the problem of \textit{minimizing} the equity of a bank $v$.

\begin{theorem} \label{th:mineq}
The problem Min\textsc{Equity}$(v)$ is NP-hard to approximate to any $n^{1-\epsilon}$ factor.
\end{theorem}

\begin{proof}
This only requires a slight modification to the same setting: we now need to add a bank $v$ with $e_v=n$, and an outgoing CDS of weight $n$ in reference to $\overline{v_I}$. With this the optimum value is $q_v=0$ if $\phi$ is satisfiable, and $q_v=n$ otherwise. \qedhere
\end{proof}

\subsection{Global objective functions} \label{sec:globopt}

Given a financial system with many solutions, there are various objectives that an authority could follow when choosing the solution to implement. Some of the most natural objective functions are as follows:

\begin{itemize}[topsep=3pt, itemsep=5.5pt, parsep=0pt]
 \item Min\textsc{Default}: minimize the number of defaulting nodes, i.e. minimize $|\{ v \in B \, | \, r_v<1 \}|$
 \item Max\textsc{Prefer}: find the solution that is the primary preference of most nodes, i.e. define the maximal equity of bank $v$ as $q_v\,\!\!^{(\text{max})} = \text{max}_{r \in S} \; q_v(r)$, and then maximize $|\{ v \in B \, | \, q_v(r)=q_v\,\!\!^{(\text{max})} \}|$,
 \item Min\textsc{Unpaid}: minimize the amount of unpaid liabilities, i.e. minimize $\sum_{u,v \in B} l_{u,v} - p_{u,v}$. 
\end{itemize}

One can show that these are indeed different problems: they can obtain their optimum in distinct solutions, and the optimum for one objective might give a very low-quality solution in terms of another one.

\begin{theorem} \label{obs:twoOpt}
For any objectives $f_1$, $f_2$ from above, there is a system such that in the optimal solution for $f_1$, the value of $f_2$ is a $\Theta(n)$ factor worse than the optimum value of $f_2$.
\end{theorem}

\noindent We provide example constructions for these claims in Appendix \ref{App:DiffOpt}. In fact, one can even combine these examples into a single system with a very different optimum for each function. 

\begin{theorem} \label{obs:diffOpt}
There exists a financial system such that the optima for the objective functions above are all obtained in different solutions, and in terms of the respective metrics, each of these optima are a factor of $\Omega(\sqrt{n})$ better than any other solution in the system.
\end{theorem}

Now let us analyze these problems from a complexity perspective. We can apply a similar technique to Theorem \ref{th:maxeq} to show that it is hard to approximate any of these objectives.

\begin{theorem} \label{th:mindef}
The problem Min\textsc{Default} is NP-hard to approximate to any $n^{1-\epsilon}$ factor.
\end{theorem}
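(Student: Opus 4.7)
The plan is to reduce from Min\textsc{IDS}, which is NP-hard to approximate within any $N^{1-\epsilon}$ factor~\cite{mids}. Given a graph $G$ on $N$ vertices as input, I construct a financial system with $n = \Theta(N^4)$ banks whose Min\textsc{Default} value is, up to lower-order terms, proportional to the size of the independent dominating set encoded by the chosen solution.

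The skeleton of the construction is inherited directly from the Min\textsc{Equity}$(v)$ encoding of Section~\ref{sec:mineq}: a clean branching gadget for every $z \in V$, interpreted so that ``active'' encodes ``$z$ is in the set''; an \textsc{and} on $(x^{(z_1)}, x^{(z_2)})$ composed with a \textsc{not} for every edge to witness independence; an \textsc{or} over $\{x^{(z')} : z' \in \{z\} \cup N_G(z)\}$ for every $z$ to witness domination; and all these witnesses combined by a multi-input \textsc{and} into a single correctness node $v_{\textsc{c}}$ that evaluates to $1$ precisely when the active gadgets form a valid independent dominating set of $G$.

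The novelty lies in how the objective is extracted. First, to make every active gadget \emph{cost} defaults, I attach to the branching gadget of each $z$ a block of $M := N^3$ independent \emph{amplifier} nodes, each with a unit outgoing debt to the sink $t$ and a unit incoming CDS from the source $s$ in reference to $x$ of that gadget. Since the CDS pays $1-r_x$, in an active gadget ($r_x=1$) the amplifier receives nothing and defaults, while in an inactive one ($r_x=0$) it survives. Second, to make constraint violations catastrophically expensive, I take a \textsc{not} of $v_{\textsc{c}}$ and attach to it a block of $M' := 2NM$ \emph{cascade} nodes built in the same way, so that each cascade node defaults precisely when $r_{v_{\textsc{c}}}<1$.

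A valid solution corresponding to an IDS of size $k$ then has total default count $kM + O(N^2)$, where the $O(N^2)$ absorbs the one unavoidable default per branching gadget and at most one default per logic-gate output node; any invalid solution has at least $M' = 2NM$ defaults, strictly exceeding the default count of any valid solution. The standard ``trivially replace an unreasonable approximation by a maximal-IS solution'' argument then lets me assume the algorithm returns a valid configuration, and since $M = N^3$ dominates the $O(N^2)$ base noise uniformly (even when the IDS optimum is as small as $1$), an $\alpha$-approximation of Min\textsc{Default} yields a $(1+o(1))\alpha$-approximation of Min\textsc{IDS}. With $n = \Theta(N^4)$, choosing $\alpha = n^{1/4-\epsilon}$ produces an $O(N^{1-4\epsilon})$-approximation of Min\textsc{IDS}, contradicting its $N^{1-\epsilon'}$-inapproximability for any sufficiently small $\epsilon' > 0$. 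The main obstacle I anticipate is calibrating $M$: it must be large enough to swamp the $O(N^2)$ base noise, yet not so large that $N \cdot M$ bloats $n$ beyond $N^4$ and degrades the inapproximability exponent below $1/4$ — this trade-off is exactly what drives the $1/4$ exponent of the theorem, compared to the $1/2$ of Section~\ref{sec:equity}.
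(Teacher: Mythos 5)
Your proposal is correct and follows essentially the same route as the paper: a reduction from Min\textsc{IDS} using the Section~\ref{sec:mineq} skeleton, $N^3$ padding nodes per branching gadget whose default status tracks membership in the chosen set, a block of forced defaults that makes constraint-violating solutions strictly worse than any valid one, and $n=\Theta(N^4)$ yielding the $n^{1/4-\epsilon}$ bound. The only difference is cosmetic: the paper wires each padding node's default to a single reference node $v_z = v_{\textsc{c}} \wedge y^{(z)}$ (so one block handles both counting and the violation penalty), whereas you use incoming CDSs referencing $x^{(z)}$ for counting plus a separate cascade block keyed to $\neg v_{\textsc{c}}$ — both work.
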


\renewcommand{\proofname}{Proof sketch.}

\begin{proof}
Given a fixed constant $\epsilon$, let us select an $\epsilon'$ such that $0<\epsilon'<\epsilon$. Also, given a formula $\phi$ on $N$ variables and $M$ clauses, let us introduce $m:=N+M$. We extend the base construction of Section \ref{sec:equity} by introducing $m^{1/\epsilon'}$ distinct new banks $u_i$ to the system that all have $e_{u_i}=0$, and an outgoing CDS of weight $1$ in reference to the indicator node $v_I$.

For every variable assignment that evaluates to false, we have $r_{v_I}=0$, so all the new nodes are in default; as such, the number of defaulting nodes is $m^{1/\epsilon'} + O(m)$. On the other hand, if there is a satisfying assignment, then the banks $u_i$ have no liability in the corresponding solution, so the number of defaulting banks is only $O(m)$. Since $n=\Theta(m^{1/\epsilon'})$ in this system, the best solution has either $\Theta(n)$ or $O(n^{\epsilon'})<n^{\epsilon}$ defaults, depending on whether $\phi$ is satisfiable; this shows an inapproximability to any $n^{1-\epsilon}$ factor.

Since $\epsilon'$ is a constant, our construction on $O(m^{1/\epsilon'})$ nodes still has a size that is polynomial in the size $m$ of the original formula $\phi$. As such, any polynomial-time approximation algorithm in $n$ would also have a running time that is polynomial in $m$.\qedhere
\end{proof}

We can also rephrase the Min\textsc{Default} problem as maximizing the number of surviving (non-defaulting) nodes; the two problems clearly have the same optimal solution.  However, this Max\textsc{Surviving} problem is defined by a different metric in its objective function, so it could behave very differently in terms of approximability (see e.g. the minimum vertex cover and maximum independent set problems, which are also complements \cite{vc,inset}). However, it turns out that in our case, the problem is hard to approximate in both metrics.

\begin{theorem} \label{th:maxsur}
The problem Max\textsc{Surviving} is NP-hard to approximate to any $n^{1-\epsilon}$ factor.
\end{theorem}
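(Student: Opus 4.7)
The plan is to mirror the Min\textsc{Default} construction of Theorem \ref{th:mindef} but reduce from Max\textsc{IS} instead of Min\textsc{IDS}, using the multiplier nodes to count active branching gadgets rather than inactive ones. Starting from a graph $G$ on $N$ nodes as input to Max\textsc{IS}, I build the same branching-gadget-plus-edge-constraint skeleton used in Section \ref{sec:equity}: a clean branching gadget per node $z$, an \textsc{and} gate on $(x^{(z_1)}, x^{(z_2)})$ whose negation indicates each edge constraint, and a single master node $v_{\textsc{c}}$ obtained as the multi-input \textsc{and} of all the edge indicators, so that $r_{v_{\textsc{c}}}=1$ exactly when the activation pattern of the gadgets encodes an independent set.

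Then, for each node $z$ of $G$, I introduce $N^3$ fresh multiplier nodes $u_{z,i}$ exactly as drawn in Figure \ref{fig:maxSur}: each has external assets of $1$ and an outgoing CDS of weight $2$ to the sink, in reference to the node $v_z = v_{\textsc{c}} \wedge x^{(z)}$. The key local calculation is that $u_{z,i}$ survives iff $r_{v_z}=1$: if $v_z$ is in complete default, its liability is $2 \cdot (1-0) = 2$ while its assets are only $1$, forcing default, whereas if $v_z$ survives the liability collapses to $0$. Thus $u_{z,i}$ survives precisely when both the branching gadget of $z$ is active (so $r_{x^{(z)}}=1$) and all independence constraints hold (so $r_{v_{\textsc{c}}}=1$).

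Consequently, for any solution that encodes a valid independent set of size $k$, exactly $k \cdot N^3$ multiplier nodes survive, while the rest of the construction (branching gadgets, logic gates, source, sink) contributes only $O(N^2)$ to the survivor count. If a solution violates an edge constraint, then $v_{\textsc{c}}$ is in complete default, every $v_z$ collapses, and all $N \cdot N^3$ multiplier nodes die, leaving only $O(N^2)$ survivors; such outputs are unreasonably bad and can, following the convention established in Section \ref{sec:equity}, be replaced by the trivial one-active-branching-gadget solution which already yields $N^3 + O(N^2)$ survivors. So any reasonable approximation can be assumed to encode a valid independent set.

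The main (mild) obstacle is bookkeeping the additive $O(N^2)$ slack against the dominant $k N^3$ term so that approximation ratios translate cleanly. Since $N^3 \gg N^2$, a $\gamma$-approximation of Max\textsc{Surviving} forces $k N^3 + O(N^2) \geq (k^* N^3 + O(N^2))/\gamma$, hence $k \geq k^*/\gamma \cdot (1 - o(1))$, i.e. a $(1+o(1))\gamma$-approximation of Max\textsc{IS}. The total system has $n = O(N^4)$ nodes, so an $n^{1/4-\epsilon}$ approximation of Max\textsc{Surviving} would yield an $N^{1-4\epsilon+o(1)}$, hence an $N^{1-\epsilon'}$, approximation of Max\textsc{IS} for a suitable $\epsilon'>0$, contradicting the hardness of \cite{inset} unless $\mathrm{P}=\mathrm{NP}$.
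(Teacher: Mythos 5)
Your construction and argument coincide with the paper's own proof: the same Max\textsc{IS} skeleton from Section \ref{sec:equity}, the same $N^3$ multiplier nodes $u_{z,i}$ with $e_{u_{z,i}}=1$ and a weight-2 CDS referencing $v_z = v_{\textsc{c}} \wedge x^{(z)}$, and the same $n=O(N^4)$ bookkeeping. The proposal is correct and essentially identical to the paper's proof, just with the survival calculation and the approximation-ratio accounting spelled out more explicitly.
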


We can use different variants of the same proof technique to show the same hardness result for the other two objectives. Furthermore, similar to Max\textsc{Surviving}, we can also define dual problems for these objectives, which are also hard to approximate.

\begin{theorem} \label{th:prefer}
The problems Max\textsc{Prefer} and Min\textsc{Unpaid} (as well as their dual problems Min\textsc{LeastPrefer} and Max\textsc{Paid}) are NP-hard to approximate to any $n^{1-\epsilon}$ factor.
\end{theorem}

\subsection{More complex objectives}

\paragraph{Most balanced solution.} In a slightly different setting, an authority could want to find a solution where the distribution of equity is balanced in some sense. E.g. if we have two larger alliances of banks (i.e. sets of nodes), then our goal might be to find a solution that distributes the total equity evenly between these alliances.

In the simplest case of this problem, we consider two nodes $v_1$ and $v_2$, and we define the problem Min\textsc{Diff}$(v_1,v_2)$ of finding the solution where $|q_{v_1}-q_{v_2}|$ is minimal.

\renewcommand{\proofname}{Proof.}

\begin{theorem} \label{th:balance}
The problem Min\textsc{Diff}$(v_1,v_2)$ is NP-hard to approximate to any $n^{1-\epsilon}$ factor.
\end{theorem}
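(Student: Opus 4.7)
The plan is to reduce from Min\textsc{IDS}, reusing the encoding described in Section \ref{sec:mineq}. Given a graph $G$ on $N$ nodes, I build a branching gadget for each node $z\in V(G)$, use \textsc{and} gates on edges to enforce independence, multi-input \textsc{or} gates on closed neighborhoods to enforce domination, and combine all the constraint outputs into a single indicator node $v_{\textsc{c}}$ with $r_{v_{\textsc{c}}}=1$ iff the chosen subset of active gadgets is an independent dominating set. The only new part is to place $v_1$ and $v_2$ on top of this machinery so that the equity gap $|q_{v_1}-q_{v_2}|$ faithfully encodes the size $k$ of that set, with a sufficiently large penalty for invalid solutions.

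Concretely, I set $e_{v_1}=e_{v_2}=2N$ and give $v_1$ the following outgoing CDSs, all with $v_2$ as creditor: one CDS of weight $1$ for each node $z$ of $G$, referencing the node $y^{(z)}$ of the corresponding branching gadget; and a single ``penalty'' CDS of weight $N$ referencing $v_{\textsc{c}}$. In any solution encoding a valid IDS of size $k$, exactly the $k$ active branching gadgets have $y^{(z)}$ in complete default, so $v_1$ owes $v_2$ a total of $k$ through the first batch of CDSs, while the penalty CDS contributes nothing because $v_{\textsc{c}}=1$. The total liability of $v_1$ is at most $2N$, so $v_1$ never defaults; $v_2$ has no outgoing contracts and stays solvent as well. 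Hence $q_{v_1}=2N-k$, $q_{v_2}=2N+k$, and $|q_{v_1}-q_{v_2}|=2k$. In any solution violating some independence or domination constraint, $r_{v_{\textsc{c}}}=0$ and $v_1$ additionally owes $N$, so the gap is at least $2N$.

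For the transfer, suppose there is a polynomial-time $\gamma$-approximation algorithm for Min\textsc{DiffEq}$(v_1,v_2)$ with $\gamma\leq n^{1/2-\epsilon}$. Run it on the constructed system and let $k^*$ denote the Min\textsc{IDS} optimum on $G$; note that the optimum of Min\textsc{DiffEq} is $2k^*$, since some valid solution exists with gap $2k^*\leq 2N$ and every invalid solution has gap at least $2N$. If the returned solution is valid with gap $2k$, the corresponding subset is a $\gamma$-approximate IDS of $G$ directly. If it is invalid with gap at least $2N$, then $2N\leq 2\gamma k^*$, so $k^*\geq N/\gamma$; in this case output any maximal independent set, which has size at most $N\leq\gamma k^*$ and is therefore a $\gamma$-approximate IDS. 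Either way we obtain a polynomial-time $\gamma$-approximation for Min\textsc{IDS}, and since the system has $n=O(N^2)$ nodes we have $\gamma\leq n^{1/2-\epsilon}=O(N^{1-2\epsilon})$, contradicting the known $N^{1-\epsilon'}$-inapproximability of Min\textsc{IDS} for $\epsilon'=2\epsilon$.

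The main obstacle is calibrating the external assets and the penalty weight so that the behavior of $v_1$ and $v_2$ is completely ``clean'': both must stay solvent in every solution (otherwise $q_v$ would be clipped to $0$ and the identity $|q_{v_1}-q_{v_2}|=2k$ would break down), yet the invalid-branch penalty must still dominate the largest possible legitimate gap. The choice $e_{v_1}=e_{v_2}=2N$ with penalty weight $N$ is tight in this sense: even with all $N$ branching gadgets simultaneously active and $v_{\textsc{c}}=0$, $v_1$'s liabilities exactly match its assets and $v_1$ survives at the boundary, while any strictly smaller external asset would send $v_1$ into default and invalidate the linear reading of the objective. A secondary routine check is that the branching gadgets, logical gates, and $O(N)$ new CDSs contribute only a quadratic blowup, giving $n=O(N^2)$ as needed for the exponent conversion.
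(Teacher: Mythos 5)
Your proposal is correct and matches the paper's own proof essentially exactly: the same reuse of the Min\textsc{IDS} construction, the same choice of $e_{v_1}=e_{v_2}=2N$ with weight-$1$ CDSs from $v_1$ to $v_2$ referencing the nodes $y$ of the branching gadgets, the same weight-$N$ penalty CDS conditioned on $v_{\textsc{c}}$, the gap $|q_{v_1}-q_{v_2}|=2k$ for valid solutions versus at least $2N$ for invalid ones, and the $n=O(N^2)$ exponent conversion. Your explicit verification that $v_1$ never defaults and your fallback to a maximal independent set when the algorithm returns an invalid solution are details the paper leaves implicit, but they do not change the argument.
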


\begin{proof}
We can simply consider the Min\textsc{Equity}$(v)$ construction with $v_1:=v$, and add an extra bank $v_2$ such that $q_{v_2}=0$. This system has $|q_{v_1}-q_{v_2}| = q_{v_1}$, so we can apply the same reduction as in the Min\textsc{Equity} case. \qedhere
\end{proof}

\noindent This already shows that the more general problem of minimizing $|\sum_{v_1 \in V_1} \, q_{v_1} - \sum_{v_2 \in V_2} \, q_{v_2}|$ for two sets of nodes $V_1$ and $V_2$ is also hard. One can also show that the problem still remains hard in the special case when $V_1 \cup V_2 = B$, i.e. if the alliances cover the whole system.

\paragraph{Most representative solution.} It could also be a reasonable goal to select a solution that is somehow representative of the whole solution space $S$. Assuming a fixed distance metric between two solutions (for example, let $d(r, r'):=\sum_{v \in B} |r_v-r_v'|$), there are many natural ways to define a metric of centrality for a given solution $r$ in $S$.

We only discuss one natural approach here: let us define the centrality of a solution $r$ as
\[ cent(r)=\frac{1}{|S|}\sum_{r' \in S} \, d(r, r') \, , \]
and let Min\textsc{Dist} denote the problem of finding the solution $r$ with the lowest $cent(r)$ value.

Note that our result essentially shows that the solution space can exhibit a threshold behavior between two very different shapes, and it is already hard to decide which of the two shapes is obtained. This suggests that the problem is also hard in any other reasonable formulation, i.e. for other distance functions or centrality metrics.
 
\begin{theorem} \label{th:repr}
The problem $Min\textsc{Dist}$ is NP-hard to approximate to any $n^{1-\epsilon}$ factor.
\end{theorem}

\renewcommand{\proofname}{Proof sketch.}

\begin{proof}

The main idea is to add two large sets of nodes to our construction, as sketched in Figure \ref{fig:repr}. The \emph{generating group} consists of $N^2$ independent branching gadgets, while the \emph{control group} has $m^{1/\epsilon'}$ single nodes with an outgoing debt (where $m$ denotes the size of $\phi$ and $\epsilon'<\epsilon$ as before). We ensure that both groups only receive funds if $r_{v_I\!}=1$; otherwise, all the new nodes are in default.

Since the control group contains almost all of the nodes asymptotically, the centrality of a solution is essentially defined by the recovery rates of the nodes in the control group. If $\phi$ is unsatisfiable, then every assignment produces $r_{v_I}=0$, and thus the control nodes have recovery rates of $0$ in every solution. On the other hand, if $\phi$ is satisfiable, then the branching gadgets in the generating group will introduce $2^{N^2}$ new solutions (for each satisfying assignment), which reduces the at most $2^N$ unsatisfying solutions to an asymptotically irrelevant part of $S$. In this case, the control nodes have a recovery rate of $1$ in almost every solution.

Hence the two cases are very different in terms of solution space. An approximation algorithm would always need to find a satisfying assignment if one exists; otherwise, it returns a solution with an average distance of at least $m^{1/\epsilon'\!} \approx n$, while the optimum has a distance of only $O(m) \approx n^{1-\epsilon'}$. \qedhere
\end{proof}

\begin{figure}
\minipage{0.56\textwidth}
	\centering
	\vspace{6pt}
	\resizebox{1.02\textwidth}{!}{\definecolor{dgray}{gray}{0.35}

\begin{tikzpicture}

	\draw[thick, arrows=-stealth] (35pt,80pt) -- (35pt,45pt);
	
	\draw[very thick, brown, arrows=-latex] (0pt,0pt) -- (103pt,0pt);
	\draw[very thick, brown, arrows=-latex] (0pt,0pt) -- (85pt,0pt) -- (85pt,80pt) -- (103pt,80pt);
	\draw[very thick, brown, densely dotted] (35pt,40pt) -- (35pt,0pt);
	
	\draw[thick, blue, arrows=-{Latex[length=1.3mm, width=0.9mm]}] (110pt,0pt) -- (142.6pt,4.8pt);
	\draw[thick, blue, arrows=-{Latex[length=1.3mm, width=0.9mm]}] (110pt,0pt) -- (142.5pt,14.5pt);
	\draw[thick, blue, arrows=-{Latex[length=1.3mm, width=0.9mm]}] (110pt,0pt) -- (142.6pt,-4.8pt);
	\draw[thick, blue, arrows=-{Latex[length=1.3mm, width=0.9mm]}] (110pt,0pt) -- (142.5pt,-14.5pt);
	\draw[thick, blue, arrows=-{Latex[length=1.3mm, width=0.9mm]}] (145pt,15pt) -- (169pt,2.6pt);
	\draw[thick, blue, arrows=-{Latex[length=1.3mm, width=0.9mm]}] (145pt,5pt) -- (167.4pt,0.9pt);
	\draw[thick, blue, arrows=-{Latex[length=1.3mm, width=0.9mm]}] (145pt,-5pt) -- (167.4pt,-0.9pt);
	\draw[thick, blue, arrows=-{Latex[length=1.3mm, width=0.9mm]}] (145pt,-15pt) -- (169pt,-2.6pt);
	
	\draw[thick, blue, arrows=-{Latex[length=1.3mm, width=0.9mm]}] (145pt,83pt) -- (158pt,81pt);
	\draw[thick, blue, arrows=-{Latex[length=1.3mm, width=0.9mm]}] (145pt,77pt) -- (158pt,79pt);
	\draw[thick, blue, arrows=-{Latex[length=1.3mm, width=0.9mm]}] (145pt,70pt) -- (158pt,68pt);
	\draw[thick, blue, arrows=-{Latex[length=1.3mm, width=0.9mm]}] (145pt,64pt) -- (158pt,66pt);
	\draw[thick, blue, arrows=-{Latex[length=1.3mm, width=0.9mm]}] (145pt,95pt) -- (158pt,93pt);
	\draw[thick, blue, arrows=-{Latex[length=1.3mm, width=0.9mm]}] (145pt,89pt) -- (158pt,91pt);
	\draw[thick, brown, arrows=-{Latex[length=1.3mm, width=0.9mm]}] (110pt,80pt) -- (130pt,80pt) -- (143pt,83pt);
	\draw[thick, brown, arrows=-{Latex[length=1.3mm, width=0.9mm]}] (130pt,80pt) -- (143pt,77pt);
	\draw[thick, brown, arrows=-{Latex[length=1.3mm, width=0.9mm]}] (120pt,80pt) -- (120pt,67pt) -- (130pt,67pt) -- (143pt,70pt);
	\draw[thick, brown, arrows=-{Latex[length=1.3mm, width=0.9mm]}] (130pt,67pt) -- (143pt,64pt);
	\draw[thick, brown, arrows=-{Latex[length=1.3mm, width=0.9mm]}] (120pt,80pt) -- (120pt,92pt) -- (130pt,92pt) -- (143pt,95pt);
	\draw[thick, brown, arrows=-{Latex[length=1.3mm, width=0.9mm]}] (130pt,92pt) -- (143pt,89pt);
	\draw[thick, brown, densely dotted] (145pt,83pt) -- (137.5pt,78.5pt);
	\draw[thick, brown, densely dotted] (145pt,77pt) -- (137.5pt,81.5pt);
	\draw[thick, brown, densely dotted] (145pt,70pt) -- (137.5pt,65.5pt);
	\draw[thick, brown, densely dotted] (145pt,64pt) -- (137.5pt,68.5pt);
	\draw[thick, brown, densely dotted] (145pt,89pt) -- (137.5pt,93.5pt);
	\draw[thick, brown, densely dotted] (145pt,95pt) -- (137.5pt,90.5pt);
	
	\node[anchor=center] at (126pt,11pt) {\tiny $1$};
	\node[anchor=center] at (130pt,3pt) {\tiny $1$};
	\node[anchor=center] at (130pt,-3pt) {\tiny $1$};
	\node[anchor=center] at (126pt,-11pt) {\tiny $1$};
	
	\node[anchor=center] at (159pt,11pt) {\tiny $1$};
	\node[anchor=center] at (157pt,3pt) {\tiny $1$};
	\node[anchor=center] at (157pt,-3pt) {\tiny $1$};
	\node[anchor=center] at (159pt,-11pt) {\tiny $1$};
	
	\node[anchor=center] at (92pt,-5pt) {\scriptsize $\infty$};
	\node[anchor=center] at (92pt,85pt) {\scriptsize $\infty$};
	
	\draw [fill=white] (-2pt,110pt) rectangle (72pt,70pt);
	\node[anchor=center] at (35pt,96pt) {\scriptsize \textit{Base construction}};
	\draw[black, fill=white] (35pt,70pt) circle (1.6ex);
	\node[anchor=center] at (35.5pt,69.5pt) {\normalsize $v_I$};
	
	\draw (110pt,35pt) rectangle (180pt,-25pt);
	\node[anchor=center] at (145pt,27pt) {\scriptsize \textit{Control group}};
	\draw[black, fill=white] (145pt,15pt) circle (0.6ex);
	\draw[black, fill=white] (145pt,5pt) circle (0.6ex);
	\draw[black, fill=white] (145pt,-5pt) circle (0.6ex);
	\draw[black, fill=white] (145pt,-15pt) circle (0.6ex);
	\draw[black, fill=white] (170pt,0pt) circle (0.6ex);
	
	\draw [dashed, gray, thick] (135pt,21pt) rectangle (153pt,-21pt);
	\draw[gray, thick] (153pt,15pt) -- (186pt,15pt);
	\draw[gray, thick] (189pt,23pt) -- (186pt,23pt) -- (186pt,7pt) -- (189pt,7pt);
	\node[anchor=center] at (212pt,20pt) {\scriptsize $m^{1/\epsilon'}\!$ new};
	\node[anchor=center] at (212pt,11pt) {\scriptsize nodes};
	
	\draw (110pt,55pt) rectangle (180pt,115pt);
	\node[anchor=center] at (145pt,108pt) {\scriptsize \textit{Generating group}};
	\draw[black, fill=white] (145pt,83pt) circle (0.5ex);
	\draw[black, fill=white] (145pt,77pt) circle (0.5ex);
	\draw[black, fill=white] (160pt,80pt) circle (0.5ex);
	\draw[black, fill=white] (145pt,70pt) circle (0.5ex);
	\draw[black, fill=white] (145pt,64pt) circle (0.5ex);
	\draw[black, fill=white] (160pt,67pt) circle (0.5ex);
	\draw[black, fill=white] (145pt,89pt) circle (0.5ex);
	\draw[black, fill=white] (145pt,95pt) circle (0.5ex);
	\draw[black, fill=white] (160pt,92pt) circle (0.5ex);
	
	\draw [dashed, gray, thick] (125pt,100pt) rectangle (165pt,60pt);
	\draw[gray, thick] (165pt,95pt) -- (186pt,95pt);
	\draw[gray, thick] (189pt,106pt) -- (186pt,106pt) -- (186pt,84pt) -- (189pt,84pt);
	\node[anchor=center] at (212pt,103pt) {\scriptsize $N^2$ distinct};
	\node[anchor=center] at (212pt,94.5pt) {\scriptsize branching};
	\node[anchor=center] at (212pt,87pt) {\scriptsize gadgets};
	
	\draw [fill=white] (30pt,49pt) rectangle (40pt,37pt);
	\node[anchor=center] at (35pt,45pt) {\small $\boldsymbol{\neg}$};
	\draw[black, fill=white] (35pt,35pt) circle (1.6ex);

	\draw[black, fill=white] (0pt,0pt) circle (1.6ex);
	\node[anchor=center] at (0pt,0pt) {\small $s$};
	\draw[black, fill=white] (110pt,0pt) circle (1.6ex);
	\node[anchor=center] at (110pt,-0.5pt) {\small $s_{_{\!}c}$};
	\draw[black, fill=white] (110pt,80pt) circle (1.6ex);
	\node[anchor=center] at (110pt,79.5pt) {\small $s_{_{\!}g}$};
	
	\draw [fill=white] (4pt,-3pt) rectangle (13pt,-10pt);
	\node[anchor=center] at (8.5pt,-6.5pt) {\scriptsize $\infty$};
	\draw [fill=white] (113.5pt,-3pt) rectangle (119.5pt,-12pt);
	\node[anchor=center] at (116.5pt,-7.5pt) {\footnotesize $0$};
	\draw [fill=white] (113.5pt,77.5pt) rectangle (119.5pt,68.5pt);
	\node[anchor=center] at (116.5pt,73pt) {\footnotesize $0$};

\end{tikzpicture}}
	\caption{Construction of Theorem \ref{th:repr}}
	\label{fig:repr}
\endminipage\hfill
\hspace{0.02\textwidth}
\minipage{0.41\textwidth}	
	\vspace{10pt}
	\resizebox{0.98\textwidth}{!}{\definecolor{dgray}{gray}{0.35}

\begin{tikzpicture}

	\draw[thick, arrows=-stealth] (75pt,80pt) -- (75pt,50pt);
	\draw[very thick, blue, arrows=-latex] (-2pt,40pt) -- (-2pt,74pt);
	\draw[very thick, blue, arrows=-latex] (2pt,80pt) -- (2pt,46pt);
	
	\draw[very thick, brown, arrows=-latex] (0pt,2pt) -- (69pt,2pt);
	\draw[very thick, brown, arrows=-latex] (0pt,-2pt) -- (69pt,-2pt);
	\draw[very thick, brown, densely dotted] (0pt,40pt) -- (0pt,25pt) -- (35pt,25pt) -- (35pt,2pt);
	\draw[very thick, brown, densely dotted] (75pt,40pt) -- (75pt,25pt) -- (40pt,25pt) -- (40pt,-2pt);
	
	\draw[thick, brown, densely dotted] (0pt,40pt) -- (20pt,40pt) -- (20pt,120pt) -- (124.5pt,120pt)  -- (124.5pt,105pt);
	\draw[thick, brown, densely dotted] (121.5pt,120pt) -- (121.5pt,90pt);
	\draw[thick, brown, densely dotted] (118.5pt,120pt) -- (118.5pt,75pt);
	\draw[thick, brown, densely dotted] (115.5pt,120pt) -- (115.5pt,40pt);
	
	\node[anchor=center] at (-6pt,60pt) {\scriptsize $1$};
	\node[anchor=center] at (6pt,60pt) {\scriptsize $1$};
	\node[anchor=center] at (25pt,9pt) {\scriptsize $1$};
	\node[anchor=center] at (50pt,-9pt) {\scriptsize $1$};
	
	\draw [fill=white] (130pt,110pt) rectangle (160pt,100pt);
	\draw [fill=white] (130pt,95pt) rectangle (160pt,85pt);
	\draw [fill=white] (130pt,80pt) rectangle (160pt,70pt);
	\draw [fill=white] (130pt,45pt) rectangle (160pt,35pt);
	
	\draw[thick, brown, arrows=-latex] (75pt,105pt) -- (140pt,105pt);
	\draw[thick, brown, arrows=-latex] (75pt,90pt) -- (140pt,90pt);
	\draw[thick, brown, arrows=-latex] (75pt,75pt) -- (140pt,75pt);
	\draw[thick, brown, arrows=-latex] (75pt,40pt) -- (140pt,40pt);
	
	\draw [fill=white] (38pt,110pt) rectangle (111.5pt,70pt);
	\node[anchor=center] at (75pt,96pt) {\scriptsize \textit{Base construction}};
	\draw[black, fill=white] (75pt,70pt) circle (1.6ex);
	\node[anchor=center] at (75.5pt,69.5pt) {\normalsize $v_I$};
	
	\draw [fill=white] (70pt,54pt) rectangle (80pt,42pt);
	\node[anchor=center] at (75pt,50pt) {\small $\boldsymbol{\neg}$};
	\draw[black, fill=white] (75pt,40pt) circle (1.6ex);
	
	\draw[gray, thick] (128pt,23pt) -- (128pt,20pt) -- (162pt,20pt) -- (162pt,23pt);
	\draw[gray, thick] (145pt,20pt) -- (145pt,17pt);
	\node[anchor=center] at (145pt,11pt) {\scriptsize unhappy};
	\node[anchor=center] at (145pt,3pt) {\scriptsize penalty};
	\node[anchor=center] at (145pt,-5pt) {\scriptsize gadgets};

	\draw[black, fill=white] (0pt,0pt) circle (1.6ex);
	\node[anchor=center] at (0pt,-0.5pt) {\small $s_0$};
	\draw[black, fill=white] (75pt,0pt) circle (1.6ex);
	\node[anchor=center] at (75pt,0pt) {\small $w$};
	\draw[black, fill=white] (0pt,40pt) circle (1.6ex);
	\node[anchor=center] at (0pt,40pt) {\small $v_0$};
	\draw[black, fill=white] (0pt,80pt) circle (1.6ex);
	\node[anchor=center] at (0pt,80.5pt) {\small $v_0'$};
	
	\draw [fill=white] (3.5pt,-3pt) rectangle (9.5pt,-12pt);
	\node[anchor=center] at (6.5pt,-7.5pt) {\scriptsize $2$};
	\draw [fill=white] (78.5pt,-3pt) rectangle (84.5pt,-12pt);
	\node[anchor=center] at (81.5pt,-7.5pt) {\footnotesize $0$};
	\draw [fill=white] (4.5pt,38pt) rectangle (10.5pt,29pt);
	\node[anchor=center] at (7.5pt,33.5pt) {\footnotesize $0$};
	\draw [fill=white] (4.5pt,78pt) rectangle (10.5pt,69pt);
	\node[anchor=center] at (7.5pt,73.5pt) {\footnotesize $0$};

\end{tikzpicture}}
	\vspace{5pt}
	\caption{Construction of Theorem \ref{th:better}}
	\label{fig:subopt}
\endminipage\hfill
\end{figure}

\paragraph{Strictly better solution.} \label{sec:better}

Recall that in case of systems with loss, we can also have Pareto-suboptimal solutions, so it is natural to ask if a specific solution can be improved: if there is a solution $r'$ strictly better than $r$, then we would probably want to implement $r'$ instead of $r$. If such an $r'$ was easy to find, then we could iteratively improve an initial solution until we eventually find a Pareto-optimal solution.

\begin{theorem} \label{th:better}
Given a solution $r$, it is NP-hard to decide if $r$ is Pareto-suboptimal.
\end{theorem}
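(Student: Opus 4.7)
The natural setting for this question is systems with loss, since Lemma~\ref{ops:pareto} shows that every solution is Pareto-optimal when $\alpha=\beta=1$; I will therefore fix loss parameters strictly below $1$. The plan is to reduce from the decision version of Max\textsc{IS}: given an instance $(G,k)$, I will build a financial system together with a designated clearing vector $r$ so that $r$ is Pareto-suboptimal if and only if $G$ admits an independent set of size at least $k$.

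I would start from the decision-variant Max\textsc{IS} construction of Section~\ref{sec:decprob}, which yields an indicator node $v_{\textsc{d}}$ with $r_{v_{\textsc{d}}}=1$ exactly when the required independent set exists, and attach to it the gadget sketched in Figure~\ref{fig:subopt}. This extension contains a mutually-indebted pair $v_0, v_0'$ admitting two candidate states, a source $s_0$ feeding a hub $w$ through two CDSs --- one in reference to $v_0$ and one in reference to $\neg v_{\textsc{d}}$ --- and a large downstream block of ``unhappy penalty'' nodes whose payoffs are controlled by further CDSs from $w$ in reference to $v_0$. I tune the notionals so that in the ``bad'' state the hub $w$ defaults and the loss factor $\beta<1$ destroys a measurable fraction of the throughput, depressing the equity of every penalty node below what it could otherwise be; in the ``good'' state those CDSs switch off, the loss disappears, and every penalty node strictly improves. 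The CDS in reference to $\neg v_{\textsc{d}}$ is what couples the situation to Max\textsc{IS}: weights are calibrated so that the good state is a consistent clearing vector precisely when $v_{\textsc{d}}=1$, i.e., precisely when $G$ admits the desired independent set.

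Taking $r$ to be the bad-state clearing vector, the forward direction is immediate: if $G$ has the independent set, the good-state clearing vector $r'$ exists, strictly raises the equity of every penalty node (and of $v_0, v_0'$ themselves) and leaves every other equity unchanged, witnessing that $r$ is Pareto-suboptimal. The main obstacle I anticipate is the converse: if no such independent set exists, I must rule out every Pareto-dominating alternative. Because $v_0, v_0'$ carry mutual debts, the system admits a continuum of intermediate subsolutions parametrized by $r_{v_0}\in[0,1]$, and I need to argue that each of them either still drives some flow through the lossy hub $w$ --- strictly decreasing some equity relative to $r$ --- or else implicitly requires the $\neg v_{\textsc{d}}$ CDS to fire, encoding the forbidden independent set. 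Choosing the loss parameters and notionals carefully (in particular, making the block of penalty nodes large enough that their collective equity loss dominates any partial recovery of $v_0$ or $v_0'$) is the technical heart of the reduction; once this monotonicity is established, deciding Pareto-suboptimality of $r$ decides the Max\textsc{IS} instance, and the theorem follows.
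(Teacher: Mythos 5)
Your high-level skeleton matches the paper's: reduce from the decision version of Max\textsc{IS}, work in a system with loss, use the mutually indebted pair $v_0,v_0'$ as a branching device, feed a node $w$ by two CDSs referencing $v_0$ and $\neg v_{\textsc{d}}$, and use unhappy penalty gadgets. But the proposal misses the one idea that makes the reduction work, and as a result the backward direction---which is the entire content of the theorem---is left unproven. In the paper's construction, $w$ is \emph{not} a lossy hub that defaults in the bad state; it is the unique node that is \emph{happy} in $r$ (it has $q_w=1$, received through the CDS that fires because $v_0$ is in default), and the penalty gadgets are attached to every node of the Max\textsc{IS} construction so that all of those nodes have equity $0$ in $r$ and can only improve. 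The Pareto argument then hinges entirely on $w$: any solution $r'$ dominating $r$ must satisfy $q_w(r')\geq 1$; since $r$ is the only solution with $r_{v_0}=0$, any other solution has $r_{v_0}=1$, which kills $w$'s first CDS, so $q_w(r')\geq 1$ forces the second CDS to fire, i.e.\ $r_{v_{\textsc{d}}}=1$, i.e.\ an independent set of size $k$ exists. Your version inverts this: you have $w$ defaulting in the bad state and the penalty block hanging downstream of $w$, which both breaks the forward direction (the Max\textsc{IS}-gadget nodes are then unconstrained in $r$, so a ``good-state'' solution can decrease some of their equities) and leaves you with no node whose equity in $r$ blocks domination when no independent set exists. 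Your own text concedes this by deferring the converse as ``the technical heart''; without the anchor role of $w$, the calibration you describe does not establish it.

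A secondary error: you assert that the pair $v_0,v_0'$ admits a continuum of subsolutions parametrized by $r_{v_0}\in[0,1]$. That is true only for $\beta=1$. For the $\beta<1$ regime you fixed at the outset, any defaulting solution must satisfy $r_{v_0}=\beta\, r_{v_0'}$ and $r_{v_0'}=\beta\, r_{v_0}$, forcing $(0,0)$, so the only solutions are $(0,0)$ and $(1,1)$ and $v_0$ is a genuine binary node. This is exactly what makes the case analysis in the converse finite and clean; treating it as a continuum to be tamed by ``monotonicity'' is both unnecessary and a sign that the binary structure of the gadget was not worked out. Also, $v_0$ and $v_0'$ have equity $0$ in both states, so they do not strictly improve in the forward direction as you claim; the strict improvement comes from the unhappy penalty gadgets.
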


\begin{proof}
The construction, shown in Figure \ref{fig:subopt}, is built around a binary node $v_0$. To each node $u$ of our base construction, we add a so-called \emph{unhappy penalty gadget}. This essentially means that if $r_{v_0}=0$, then $u$ pays a large penalty to a special sink $t_0$; however, $t_0$ has further gadgets attached to ensure that $t_0$ is still worse off if $r_{v_0}=0$, even though it receives money from this penalty. As such, the default of $v_0$ is not favorable to any node in the system; note that this is only possible in systems with loss.

The base idea then is to add another node $w$, which, on the other hand, receives $1$ unit of money if either $r_{v_0}=0$, or $r_{v_I}=1$. Let $r$ be the solution where $r_{v_0}=0$, and thus all nodes in the base construction are in default, but $q_w=1$. Any solution strictly better than $r$ must also have $q_w \geq 1$. If $v_0$ is not in default, this is only possible if we find a satisfying assignment of $\phi$, thus ensuring $r_{v_I}=1$. \qedhere
\end{proof}

\section{Restricted financial networks} \label{sec:models}

Our final goal in the paper is to understand the key reasons behind this computational complexity, and whether we can introduce some restrictions to our network model to eliminate this phenomenon. In particular, we show that the same hardness results also hold in many severely restricted variants of our financial system model, and it takes a combination of multiple restrictions to ensure that the solution space is sufficiently simple. These model variants and the corresponding proofs are discussed in more detail in Appendix \ref{App:B}.

Before considering restrictions to the network, let us first briefly discuss a familiar extension of the model: default costs. We point out that while our hardness results were mostly shown for systems without loss, they can also be extended to systems with loss with some minor modifications.

\begin{theorem} \label{th:withloss}
Theorems \ref{th:maxeq}--\ref{th:mineq} and \ref{th:mindef}--\ref{th:better} also hold for any $\alpha, \beta \in (0,1]$. 
\end{theorem}

\subsection{Unweighted networks}

For convenience, we have sometimes used rather large edge weights in our constructions. One could argue that this is unrealistic, since in practice, the payment obligations are often in the same magnitude. As such, we first show that our hardness results also carry over to the setting when each contract in the network has the same weight.

\begin{theorem} \label{th:unweight}
Theorems \ref{th:maxeq}--\ref{th:mineq} and \ref{th:mindef}--\ref{th:better} also hold in unit-weight networks. 
\end{theorem}

\renewcommand{\proofname}{Proof sketch.}

\begin{proof}
The modifications required for this setting are rather straightforward: most edges in our constructions have unit weight to begin with. Whenever the weight is a larger integer $k$, we can usually split this into $k$ distinct contracts that come from/go to $k$ distinct source/sink nodes. The only cases that require some extra consideration are the gadgets used in Theorems \ref{th:repr} and \ref{th:better}.
\end{proof}

\subsection{Restricted network structure}

In their work, Schuldenzucker \textit{et al.} also discuss several restrictions to the network structure \cite{base1, base2}. While they study these restrictions from a different perspective (their goal is to ensure that the system always has a solution, even with default costs), it is natural to ask whether our hardness results still hold in these restricted network models.

In particular, the authors define the so-called \emph{dependency graph} to express the relations of banks in a directed graph with edges of two colors:
\begin{itemize}[topsep=5pt, itemsep=0pt, parsep=6pt]
\item \textit{Green edges}: intuitively, these indicate long positions. For example, there is a green edge from $u$ to $v$ if $u$ has a contract towards $v$ (debt or CDS), or if $v$ has an outgoing CDS in reference to $u$.
\item \textit{Red edges}: intuitively, these indicate short positions. There is a red edge from $w$ to $v$ if $v$ has an incoming CDS in reference to $w$ (unless there is a debt of even larger weight from $w$ to $v$).
\end{itemize}
For details on the dependency graph, we refer the reader to Appendix \ref{App:B} or the work of \cite{base1}.

The work of \cite{base1} studies different restrictions to the network based on this dependency graph. In the most restricted case, they study systems where the dependency graph contains exclusively (or almost exclusively) green edges, so short positions are essentially banned.

\begin{definition}
We say that a financial network is a \emph{green system} if its dependency graph only contains green edges.
\end{definition}

Using a fixed-point theorem, one can show that green systems are similar to debt-only networks in the sense that they always contain a maximal solution. As such, this simpler case is not so interesting for us in terms of default ambiguity.

On the other hand, \cite{base1} also studies a more general setting where short positions are still allowed in the network, but only in a structurally restricted fashion.

\begin{definition}
A financial network is an \emph{RFC (red-free cycle) system} if no directed cycle of the dependency graph contains a red edge.
\end{definition}

The authors show that in RFC systems, one can always find a solution efficiently. Intuitively, one can iterate through the strongly connected components (SCCs) of the dependency graph in topological order, since every SCC is only dependent on the preceding ones. Since each SCC is a green system, there is always a maximal subsolution in the current SCC (if the subsolutions in previous SCCs are already fixed), and we can find this efficiently.

In contrast to this, our goal of finding the best solution is still not straightforward in these RFC systems. In particular, selecting a different (non-maximal) solution in the first SCC could allow us to find a different solution in the second SCC; while this is unfavorable to banks in the first SCC, it might be much better in terms of our global objective. In fact, our hardness results even hold in this heavily restricted class of networks.

\begin{theorem} \label{th:rfc}
Theorems \ref{th:maxeq}--\ref{th:mineq} and \ref{th:mindef}--\ref{th:better} also hold in RFC systems. 
\end{theorem}

\begin{proof}
The key observation is that directed cycles are in fact very rare in the dependency
graphs of our constructions: we mostly use logical gates that follow a specific ordering, and thus the dependency graphs are already very close to DAGs. The only exception is within the branching gadgets, where banks $x$ and $y$ have short position on each other, and hence there is a red edge between them in both directions. As such, it is sufficient to come up with an alternative branching gadget design that satisfies the RFC property.

The main idea of this gadget is to consider two banks $v_1$ and $v_2$ as in Figure \ref{fig:cyc}. For any $\rho \in [0,1]$, $r_{v_1}=r_{v_2}=\rho$ is a solution of this system.

We can then use the small and large $\rho$ values in this system to represent the two binary states; this can be achieved by creating two banks $x$ and $y$ as the outputs of two cutoff gadgets on $u$, having parameters e.g. $\eta_1=\frac{1}{3}, \eta_2=\frac{1}{2}$ and $\eta_1=\frac{1}{2}, \eta_2=\frac{2}{3}$, respectively.

Finally, we exclude the intermediate $\rho$ values by appending further gadgets to artificially make the solution significantly worse (in terms of our desired objective function) whenever we have $\rho \in [\frac{1}{3},\frac{2}{3}]$. This means that in any reasonable solution, we will have either $\rho \leq \frac{1}{3}$ or $\rho \geq \frac{2}{3}$, and hence either $r_x=1, r_y=0$ or $r_x=0, r_y=1$.
\end{proof}

\begin{figure}
\centering

\begin{tikzpicture}
	
	
	\draw[very thick, blue, arrows=-latex] (0pt,-3pt) -- (64pt,-3pt);
	\draw[very thick, blue, arrows=-latex] (70pt,3pt) -- (6pt,3pt);
	
	\node[anchor=center] at (35pt,-9pt) {\footnotesize $1$};
	\node[anchor=center] at (35pt,9pt) {\footnotesize $1$};
	
	\draw[black, fill=white] (0pt,0pt) circle (1.6ex);
	\draw[black, fill=white] (70pt,0pt) circle (1.6ex);
	
	\node[anchor=center] at (0.5pt,-0.5pt) {\normalsize $v_1$};
	\node[anchor=center] at (70.5pt,-0.5pt) {\normalsize $v_2$};
	
	\draw [fill=white] (1.5pt,-4pt) rectangle (7.5pt,-13pt);
	\node[anchor=center] at (4.5pt,-8.5pt) {\footnotesize $0$};
	\draw [fill=white] (71.5pt,-4pt) rectangle (77.5pt,-13pt);
	\node[anchor=center] at (74.5pt,-8.5pt) {\footnotesize $0$};
	
\end{tikzpicture}
	\caption{A simple debt-only network with multiple solutions}
	\label{fig:cyc}
\end{figure}
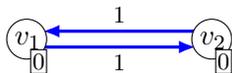

We note that the situation in Figure \ref{fig:cyc} seems rather artificial. However, recall that default ambiguity often arises after an external shock hits the market; as such, one should imagine this as a situation where banks in a cycle have lost all their funds due to such an event.

\subsection{Green systems and regularity}

Our alternative construction in Theorem \ref{th:rfc} uses the fact that a debt-only network can still have multiple solutions in some special edge cases, thus allowing us to create a large solution space. To prevent this phenomenon, we first need a deeper understanding of these cases when green system have multiple solutions.

The work of \cite{model1} already studies this question in debt-only networks, showing that the solution is unique if from any bank there is a directed path to another bank with positive funds. We prove a more general version of this result, extending the theorem to any green system, and using a weaker assumption on the topology. In particular, we show that green systems can only have multiple solutions in a special edge case: when we have a network segment with no funds and no incoming assets at all.

\begin{theorem} \label{th:ambiguity}
Let $G$ be a green system, and assume that $v$ is a bank that has two distinct recovery rates $r_v \neq r_v'$ in two solutions. Let $C$ be the set of nodes reachable from $v$ on a path of simple debts. Then the following must hold:
\begin{itemize}[topsep=3pt, itemsep=3pt, parsep=0pt]
\item for all $u \in C$ we have $e_u=0$,
\item if there is a path of contracts from a bank $w \in G$ to a bank $u \in C$, then $e_w=0$.
\end{itemize}
\end{theorem}

\begin{proof}
The main steps of the proof are as follows:
\begin{itemize}[topsep=4pt, itemsep=1pt, parsep=2pt]
{\setlength\itemindent{-2pt}
 \item Recall from before that a green system always has a maximal solution $r$ (and also a minimal solution $r'$); these assigns the highest/lowest recovery rate to all banks.
 \item In such a setting, all banks must have the same equity in any solution. Intuitively, in systems without loss, if a bank had less equity in a solution $r_0$ than in the maximal solution $r$, then some other bank would need to have more equity in $r_0$ than in $r$.
 \item If $r_{v}>r_{v}'$ (i.e. $v$ can have different recovery rates), then $v$ makes strictly more payment on its outgoing debts in $r$ than in $r'$. In a loss-free system, these extra payments traverse the network in until they either (i) reach a node $u$ with no more unfulfilled liabilities, or (ii) they arrive back at $v$. However, the first option is not possible, since this would mean $q_u>q_u'$; hence all such payments must ultimately arrive back at $v$.
 \item This means that from $v$, any path of contracts (with positive liability) must eventually lead back to $v$, implying that these contracts form an SCC $C$.
 \item Finally, no node $u \in C$ can have $e_u>0$, and also no node $w \in G$ can have a positive payment towards a bank $u \in C$. This is because $C$ is closed under outgoing payments, so if any funds arrive in $C$, then the loss-free property implies that some node $u \in C$ must have $q_u'>0$. This means that we already have $r_u'=1$ in the solution $r'$. However, if $r_v>r_v'$, then in $r$ there is a strictly positive extra payment arriving at $u$; this implies $q_u>q_u'$, which is again a contradiction.
} \qedhere
\end{itemize} 
\end{proof}

Note that the proof also makes a structural observation that the banks reachable from $v$ must form a SCC in the graph of ``meaningful'' contracts (which induce a positive liability in some solution). However, since it is not immediately clear whether a CDS is meaningful, we expressed Theorem \ref{th:ambiguity} in a weaker form, stating the restrictions only for the set of nodes $C$ that are reachable from $v$ on simple debts.

The situation described in Theorem \ref{th:ambiguity} is a very special case, so there are various ways to ensure that we exclude such networks. One natural approach is to restrict the amount of funds that banks must possess, since this is usually strictly supervised in practice.

\begin{definition}
We say that a financial system $G$ is \emph{regular} if we have $e_v>0$ for all $v \in B$.
\end{definition}

This assumption is realistic in many legal frameworks: financial regulations usually require banks to possess enough funds to cover at least a specific portion of their liabilities. Considering that default ambiguity often happens after a shock hits the market, an alternative (more practical) interpretation of this property is that all banks must keep at least some of their funds in a format that is resilient to external shocks.

Note that there are various other options to exclude the edge case of Theorem \ref{th:ambiguity} with weaker conditions; however, most of these approaches are difficult to enforce from a regulator's perspective.

On the other hand, note that Theorem \ref{th:ambiguity} only applies to green systems. If our network is not a green system, then even this rather strong condition is not sufficient to ensure that the solution is unique.

\begin{theorem} \label{th:reg}
Theorems \ref{th:maxeq}--\ref{th:mineq} and \ref{th:mindef}--\ref{th:better} also hold in regular financial systems. 
\end{theorem}

\begin{proof}
The main idea is to consider a new representation of the binary states in our gadgets: instead of $r_v=0$ and $r_v=1$, the two binary states will be represented by $r_v=0.5$ and $r_v=1$. This allows us to give some funds to every node in our construction, thus fulfilling the regularity condition.

Most of our gadgets are actually rather easy to adapt to this setting; it is again only the constructions of Theorems \ref{th:repr} and \ref{th:better} where this is somewhat more technical.
\end{proof}

\subsection{Combined restrictions: a unique solution}

This shows that we need both the RFC property and regularity together to ensure that the solution of the system is unique, and thus our hardness results can be avoided. This provides an interesting final message from our analysis: it suggests that financial regulators might need to use both topological and fund-based restrictions simultaneously in order to eliminate the computational problems arising from default ambiguity.

\begin{theorem} \label{th:unique}
If a system is both regular and RFC, then it has a unique solution. This solution can be efficiently approximated in polynomial time.
\end{theorem}

\renewcommand{\proofname}{Proof.}

\begin{proof}
We can now apply the approach of \cite{base1} for RFC systems, computing a solution by visiting the SCCs in topological order. The payments coming from the previous SCCs can simply be considered as extra funds at the bank when processing the current SCC of the network.

Due to the RFC property, the current SCC is always a green system. Regularity implies that every node $u$ in the SCC has $e_u>0$; this is only further increased by the payments from previous SCCs. As such, Theorem \ref{th:ambiguity} shows that there is always a unique subsolution in the current SCC. Altogether, this implies that the solution $r$ is unique in the whole network; as such, we can indeed simply apply the algorithm of \cite{base1} for RFC systems, which always finds an arbitrary solution.

Note, however, that the solution of our networks can also be irrational in some cases, so we can only claim that it is efficiently approximated with this method. It is already discussed in \cite{base1, base2} that given an error margin $\epsilon>0$, this algorithm finds a recovery rate vector $r^{\epsilon}$ such that $|r_v-r^{\epsilon}_v| \leq \epsilon$ for all $v \in B$, and its running time is polynomial in $n$ and $1/\epsilon$.
\end{proof}

Finally, we point out that if we have default costs, then our hardness results still hold even in the setting of Theorem \ref{th:unique}. This is because with default costs, a green system can still have multiple solutions even if it is regular. If we modify Figure \ref{fig:cyc} to have $e_u=e_v=\frac{1}{3}$ and we assume $\alpha=\beta=\frac{1}{2}$, then both $r_u=r_v=1$ and $r_u=r_v=\frac{1}{3}$ are solutions; while the former is clearly better for $u$ and $v$, the latter might be superior in terms of our objective.

\subsection*{Acknowledgements}

We would like to thank Steffen Schuldenzucker for his very valuable feedback and improvement ideas.

\bibliographystyle{splncs04}
\bibliography{references}

\newpage
\begin{appendices}

\section{Gadgets and logical gates} \label{App:0}

This section provides a more detailed overview of the basic gadgets we apply in our constructions. Note that we have already discussed the clean branching gadget in the main part of the paper. For details on the cutoff gadget, we refer the reader to the work of \cite{base2}.

It remains to discuss the logical \textsc{not}, \textsc{or} and \textsc{and} gates. We again point out that some of these gadgets (in particular, the \textsc{not} gate, and a similar gadget which behaves as a \textsc{nand} gate) were also used before in \cite{base1, base2}. Figure \ref{fig:notation} summarizes the notation of these gadgets in our figures.

For negation, we can consider the simple gadget in Figure \ref{fig:not}. If $r_{v}=0$, then $w$ receives 1 unit of money, and it can pay its debt entirely. However, if $r_{v}=1$, then $w$ has no assets at all, and thus $r_{w}=0$.

The gadget for the \textsc{or} relation is shown in Figure \ref{fig:or}; note that it already uses the previously described \textsc{not} gadget. If $r_{u}$ or $r_{v}$ is 1, then at least one of the connected \textsc{not} gadgets is in default, and thus $w$ has assets of at least 1; this already implies $r_{w}=1$. Otherwise $w$ has no assets at all, and hence we have $r_{w}=0$.

Finally, one possibility to implement the \textsc{and} relation is illustrated in Figure \ref{fig:and}. In this case, if at most one of the nodes $r_{u}$ and $r_{v}$ is 1, then $w_0$ receives a payment on at most one of the two CDSs. Therefore, $w_0$ has at most 1 assets, thus $r_{w_0} \leq \frac{1}{2}$. Since the connected cutoff gadget has $\eta_1=0.7$, we have $r_{w}=0$ in this case. On the other hand, if $r_{u}=r_{v}=1$, then $r_{w_0}=1$, and $r_{w}=1$ follows.

We point out, however, that this version of the \textsc{and} gate is more difficult to adapt to different variants of the network model, so it is often a more convenient solution to express the \textsc{and} relations with a combination of \textsc{not} and \textsc{or} gates instead.

Note that all of these gadgets only use the input nodes $v_1$ and $v_2$ as reference entities for CDSs, and thus inserting such gadgets has no effect on the behavior of the input nodes.

\section{Details of the hardness proofs} \label{App:A}

In this section we discuss our inapproximability proofs in more detail. Note that all of these proofs begin with the use of the base SAT construction, and then they append further gadgets on the indicator nodes $v_I$ and $\overline{v_I}$ to express a specific objective function.

The construction for Max\textsc{Equity}$(v)$ has already been described in Section \ref{sec:equity}. We note here that the idea of this proof suggests that we could obtain similar hardness results for even higher factors than $n$ if we use e.g. very large edge weights. However, since this makes the model somewhat unrealistic, we limit our interest to approximations of up to a factor $n$.

\begin{figure}
\minipage{0.53\textwidth}
\centering
	\vspace{42pt}
	\resizebox{1.0\textwidth}{!}{

\begin{tikzpicture}

	\draw[thick, arrows=-stealth] (0pt,120pt) -- (0pt,101pt);
	\draw[black, fill=white] (0pt,120pt) circle (1.6ex);
	\node[anchor=center] at (0pt,120pt) {\normalsize $v$};
	\draw [fill=white] (-15pt,100pt) rectangle (15pt,89pt);
	\node[anchor=center] at (0pt,95pt) {\footnotesize $\eta_1/ \, \eta_2$};
	\draw[black, fill=white] (0pt,85pt) circle (1.6ex);
	\node[anchor=center] at (0pt,85pt) {\normalsize $w$};
	\node[anchor=center] at (0pt,65pt) {\scriptsize Cutoff gadget};
	
	\draw[thick, arrows=-stealth] (60pt,120pt) -- (60pt,100pt);
	\draw[black, fill=white] (60pt,120pt) circle (1.6ex);
	\node[anchor=center] at (60pt,120pt) {\normalsize $v$};
	\draw [fill=white] (55pt,99pt) rectangle (65pt,89pt);
	\node[anchor=center] at (60pt,95pt) {\small $\boldsymbol{\neg}$};
	\draw[black, fill=white] (60pt,85pt) circle (1.6ex);
	\node[anchor=center] at (60pt,85pt) {\normalsize $w$};
	\node[anchor=center] at (60pt,65pt) {\footnotesize \textsc{not} \scriptsize gate};
	
	\draw[thick] (105pt,125pt) -- (105pt,112pt) -- (120pt,112pt);
	\draw[thick, arrows=-stealth] (135pt,125pt) -- (135pt,112pt) -- (120pt,112pt) -- (120pt,101pt);
	\draw[black, fill=white] (105pt,125pt) circle (1.6ex);
	\node[anchor=center] at (105pt,125pt) {\normalsize $u$};
	\draw[black, fill=white] (135pt,125pt) circle (1.6ex);
	\node[anchor=center] at (135pt,125pt) {\normalsize $v$};
	\draw [fill=white] (115pt,100pt) rectangle (125pt,89pt);
	\node[anchor=center] at (120pt,96pt) {\footnotesize $\boldsymbol{\vee}$};
	\draw[black, fill=white] (120pt,85pt) circle (1.6ex);
	\node[anchor=center] at (120pt,85pt) {\normalsize $w$};
	\node[anchor=center] at (120pt,65pt) {\footnotesize \textsc{or} \scriptsize gate};
	
	\draw[thick] (165pt,125pt) -- (165pt,112pt) -- (180pt,112pt);
	\draw[thick, arrows=-stealth] (195pt,125pt) -- (195pt,112pt) -- (180pt,112pt) -- (180pt,101pt);
	\draw[black, fill=white] (165pt,125pt) circle (1.6ex);
	\node[anchor=center] at (165pt,125pt) {\normalsize $u$};
	\draw[black, fill=white] (195pt,125pt) circle (1.6ex);
	\node[anchor=center] at (195pt,125pt) {\normalsize $v$};
	\draw [fill=white] (175pt,100pt) rectangle (185pt,89pt);
	\node[anchor=center] at (180pt,96pt) {\footnotesize $\boldsymbol{\wedge}$};
	\draw[black, fill=white] (180pt,85pt) circle (1.6ex);
	\node[anchor=center] at (180pt,85pt) {\normalsize $w$};
	\node[anchor=center] at (180pt,65pt) {\footnotesize \textsc{and} \scriptsize gate};
	
\end{tikzpicture}}
	\vspace{-13pt}
	\caption{Notation of our gadgets}
	\label{fig:notation}
\endminipage\hfill
\hspace{0.05\textwidth}
\minipage{0.41\textwidth}	
\centering
	\resizebox{1.0\textwidth}{!}{

\begin{tikzpicture}
	
	\draw[very thick, blue, arrows=-latex] (90pt,0pt) -- (152.5pt,0pt);
	\draw[very thick, brown, arrows=-latex] (0pt,2pt) -- (82.5pt,2pt);
	\draw[very thick, brown, densely dotted] (30pt,3pt) -- (30pt,35pt);
	\draw[very thick, brown, arrows=-latex] (0pt,-2pt) -- (82.5pt,-2pt);
	\draw[very thick, brown, densely dotted] (60pt,-1pt) -- (60pt,35pt);
	\draw[thick, arrows=-stealth] (30pt,70pt) -- (30pt,50pt);
	\draw[thick, arrows=-stealth] (60pt,70pt) -- (60pt,50pt);
	\draw[thick, arrows=-stealth] (90pt,0pt) -- (90pt,-20pt);
	
	\node[anchor=center] at (45pt,-9pt) {\footnotesize $1$};
	\node[anchor=center] at (45pt,9pt) {\footnotesize $1$};
	\node[anchor=center] at (125pt,-7pt) {\footnotesize $2$};
	
	\draw[black, fill=white] (0pt,0pt) circle (1.6ex);
	\draw[black, fill=white] (90pt,0pt) circle (1.6ex);
	\draw[black, fill=white] (160pt,0pt) circle (1.6ex);
	\draw[black, fill=white] (30pt,70pt) circle (1.6ex);
	\draw[black, fill=white] (60pt,70pt) circle (1.6ex);
	
	\node[anchor=center] at (0pt,0pt) {\normalsize $s$};
	\node[anchor=center] at (90pt,0pt) {\normalsize $w_0$};
	\node[anchor=center] at (160pt,0pt) {\normalsize $t$};
	\node[anchor=center] at (30pt,70pt) {\normalsize $u$};
	\node[anchor=center] at (60pt,70pt) {\normalsize $v$};
	
	\draw [fill=white] (25pt,49pt) rectangle (35pt,39pt);
	\draw [fill=white] (55pt,49pt) rectangle (65pt,39pt);
	\node[anchor=center] at (30pt,45pt) {\small $\boldsymbol{\neg}$};
	\node[anchor=center] at (60pt,45pt) {\small $\boldsymbol{\neg}$};
	\draw[black, fill=white] (30pt,35pt) circle (1.6ex);
	\draw[black, fill=white] (60pt,35pt) circle (1.6ex);
	
	\draw [fill=white] (73pt,-22pt) rectangle (107pt,-33pt);
	\node[anchor=center] at (90pt,-27pt) {\scriptsize \textbf{0.7/0.8}};
	\draw[black, fill=white] (90pt,-37pt) circle (1.6ex);
	\node[anchor=center] at (90pt,-37pt) {\normalsize $w$};
	
	\draw [fill=white] (4pt,-4pt) rectangle (13pt,-11pt);
	\node[anchor=center] at (8.5pt,-7.5pt) {\scriptsize $\infty$};
	\draw [fill=white] (164.5pt,-3pt) rectangle (170.5pt,-12pt);
	\node[anchor=center] at (167.5pt,-7.5pt) {\footnotesize $0$};
	\draw [fill=white] (94.5pt,-3pt) rectangle (100.5pt,-12pt);
	\node[anchor=center] at (97.5pt,-7.5pt) {\footnotesize $0$};
	
\end{tikzpicture}}
	\vspace{-15pt}
	\caption{\textsc{and} gate}
	\label{fig:and}
\endminipage\hfill
\end{figure}

\subsection{Global objectives in Section \ref{sec:globopt}}

Theorem \ref{th:maxsur} can be shown with the same construction as in Theorem \ref{th:mindef}. For every assignment where $\phi$ evaluates to false, the new nodes are all in default, so the number of surviving nodes is only $O(m)$. On the other hand, for a satisfying assignment, the number of surviving nodes is at least $m^{1/\epsilon'}$. This again creates a factor of $n^{1-\epsilon'}$ difference between the two cases, so any approximation algorithm must return a solution with at least $\omega(m)$ surviving nodes if $\phi$ is satisfiable; this completes our reduction.

For Max\textsc{Prefer}, we modify this construction by setting $e_{u_i}=1$ at the extra nodes. As such, a satisfying assignment ensures that these nodes all have $q_{u_i}=1$, while an unsatisfying assignment implies $q_{u_i}=0$. Hence if $\phi$ is satisfiable, then the satisfying assignment is the primary preference of $m^{1/\epsilon'}$ nodes, while the remaining solutions are preferred by at most $O(m)$ nodes; as such, in order to give an $n^{1-\epsilon'}$ approximation, an algorithm would have to find a satisfying assignment in polynomial time.

In case of the dual problem Min\textsc{LeastPrefer} where we define $q_v\,\!\!^{(\text{min})} = \text{min}_{r \in S} \; q_v(r)$ and minimize $|\{ v \in B \, | \, q_v(r)=q_v\,\!\!^{(\text{min})} \}|$, we can use the same construction: if we have a satisfying assignment, then such an assignment is the least preferred solution to at most $O(m)$ nodes, while an unsatisfying assignment is the least preferred solution to $m^{1/\epsilon'}$ nodes. As such, any approximation algorithm needs to find a satisfying assignment if one exists.

For Min\textsc{Unpaid}, it once again suffices to use the construction of Theorem \ref{th:mindef}. For any unsatisfying assignment, the nodes $u_i$ create a total unpaid debt of $m^{1/\epsilon'}$ in the system, besides the unpaid debts in the base construction. On the other hand, with $r_{v_I}=1$, the amount of unpaid debt is only $O(m)$ altogether.

For the dual problem of maximizing $\sum_{u,v \in B} \, p_{u,v}$, we can slightly change this construction: we set $e_{u_i}=1$, and change the reference nodes of the outgoing CDSs to $\overline{v_I}$. This way, the extra nodes can all pay their liabilities in case of $r_{v_I}=1$, so a true assignment results in a paid debt of $m^{1/\epsilon'}$. On the other hand, any false assignment only has a paid debt of $O(m)$ altogether.

\subsection{Most balanced solution}

We have already seen that the setting of Theorem \ref{th:balance} is rather easy to reduce to the case of Min\textsc{Equity}$(v)$. This also settles the general case of minimizing the equity difference between two subsets of nodes $V_1$ and $V_2$.

In a slight detour, we now also briefly discuss another interesting special case of this general setting: what if $V_1$ and $V_2$ form a disjoint partitioning of the whole node set $B$, i.e. the alliances cover the whole system?

For this case, we adapt a similar approach to the Min\textsc{Default} construction; however, we now add $m^{1/\epsilon'}$ distinct sink nodes $t_i$ to the construction (note that strictly speaking, this is not a necessary modification for our proof, but it provides a more realistic construction that does not require a very large amount of funds at a single node). Then for each $i \in [1, m^{1/\epsilon'}]$ we set $e_{u_i}=e_{t_i}=1$, and we create a CDS of weight $1$ from $u_i$ to $t_i$ which is in reference to the indicator node $v_I$.

Now let $V_1$ contain all the nodes $u_i$ in our network, and $V_2$ consist of all the remaining nodes (i.e. the sinks $t_i$ and the nodes of the base construction). If we consider a satisfying assignment in this network, then $r_{v_I}=1$, and thus there is no liability between the newly added banks. This implies that the total equity in $V_1$ is $m^{1/\epsilon'}$, while the total equity in $V_2$ is $m^{1/\epsilon'}+O(m)$. This amounts to a difference of only $O(m)$ between the two sets.

On the other hand, any unsatisfying assignment implies that $V_1$ will have no equity at all, while $V_2$ still has an equity of more than $m^{1/\epsilon'}+O(m)$; hence the difference in this case is at least $m^{1/\epsilon'}$. As such, any approximation algorithm needs to find a satisfying assignment.

\subsection{Most representative solution} \label{App:repr}

We continue with the problem of finding the most representative solution.

\renewcommand{\proofname}{Proof of Theorem \ref{th:repr}.}

\begin{proof}
As outlined before, we add two large sets of nodes to the base construction: the generating group and the control group. The generating group consists of $N^2$ distinct branching gadgets, with the source nodes of these gadgets replaced by a common new node $s_g$. Let us now slightly change our previous notation, and use $m:=\max(N^2, M)$; i.e. $m$ is selected such that the base construction and the generating group altogether contains only $O(m)$ nodes.

Given a constant $\epsilon>0$, we again select a smaller constant $\epsilon' \in (0, \epsilon)$. Then we set the control group to consists of $m^{1/\epsilon'}$ distinct nodes $u_i$, each having $e_{u_i}=0$, a debt of weight 1 towards $t$, and an incoming debt of $1$ from a new common node $s_c$. The nodes $s_g$ and $s_c$ have no funds, but we add a CDS of weight $\infty$ from $s$ to both $s_g$ and $s_c$ in reference to $\overline{v_I}$.

Note that we have only chosen to use the two pseudo-source nodes $s_g$ and $s_c$ to allow a cleaner illustration in Figure \ref{fig:repr}. Instead, it would also be possible to introduce a separate source node with funds of $3$ for each branching gadget, and a separate source with funds of 1 for each $u_i$, and make the payments to each branching gadget/control group node based on a separate CDS in reference to $\overline{v_I}$. This change does not affect our distance metrics since these sources always have a recovery rate of $1$; furthermore, executing the change is indeed necessary if we want to adapt our setting to the case of unit-weight contracts.

The main idea is that in any solution that does not satisfy $\phi$, we have $r_{s_g}=r_{s_c}=0$. In the generating group, this implies that none of the branching gadgets have any assets, and thus all nodes in these gadgets are in complete default (i.e. have recovery rates of $0$); with $r_{v_I}=0$, this is the only subsolution of this subsystem. In the control group, this means that all the nodes $u_i$ are in complete default, too.

However, if there is a satisfying assignment, then this gives infinitely many assets to both $s_g$ and $s_c$. Hence each branching gadget in the generating group indeed offers a binary choice, thus introducing $2^{N^2}$ distinct solutions for each satisfying assignment. In all of these solutions, the nodes $u_i$ in the control group all have $r_{u_i}=1$. Thus if we have at least one satisfying assignment, then the number of solutions with $r_{u_i}=0$ becomes asymptotically irrelevant.

More specifically, assume that $\phi$ has a satisfying assignment, and let us show that an approximation algorithm for Min\textsc{Dist} must return a solution corresponding to a satisfying assignment in this case. For simplicity, let us first assume that there is only one satisfying assignment.

If $r_1$ is the solution corresponding to a satisfying assignment, then there are $2^{N^2}$ solutions $r$ such that $d(r_1,r)=O(m)$, and at most $2^N$ further solutions where $d(r_1,r)$ can be as high as $m^{1/\epsilon'}+O(m)$, resulting in a total distance of at most $2^{N^2} \cdot O(m) + 2^N \cdot (m^{1/\epsilon'}+O(m))$. Note that the first of the two terms is in a much larger magnitude, at least if we assume that $N$ is only polynomially smaller than $m$, i.e. $N \geq m^{\delta}$ for some constant $\delta$ (otherwise, we can modify our generating group to contain $m^2$ branching gadgets instead). This means that we can upper estimate this expression by $2 \cdot 2^{N^2} \cdot O(m)$ for $m$ large enough.

On the other hand, if our algorithm finds a solution $r_2$ that does not satisfy $\phi$, then this has a total distance of at least $2^{N^2} \cdot m^{1/\epsilon'}$ due to the control group. This implies that the difference of centrality value between the two solutions is at least a factor of $\Theta(1) \cdot m^{1/\epsilon'-1}$, or in terms of $n$, at least $\Theta(1) \cdot n^{1-\epsilon'}$.

This is again asymptotically larger than $n^{1-\epsilon}$ for $n$ large enough, and hence any approximation algorithm must find a satisfying assignment for the formula.

Note that if there are more than $1$ satisfying assignments for $\phi$, then we can use the same argument, the difference between the two solutions only grows even larger.
\end{proof}

Note that there would be many other natural ways to express the fact that it is computationally hard to understand even the general distribution of the solution space: for example, we could say that given two specific solutions $r$ and $r'$, it is even NP-hard to decide whether $cent(r)>cent(r')$.

\subsection{Strictly better solutions}

Recall that for Theorem \ref{th:better}, we consider financial systems with loss, i.e. $\alpha \neq 1$ or $\beta \neq 1$.

\paragraph*{Unhappy penalty gadget.}
A main ingredient for the proof is the unhappy penalty gadget shown in Figure \ref{fig:unhappy}. Assume that there is a node $v$ in the system, and we want to add an outgoing penalty of some large weight $h$ to $v$, conditioned on the default of an indicator binary node $v_0$. If this task was executed by simply adding a CDS from $v$ to the sink $t$, then the solutions where $v_0$ is in default would not be strictly worse for every node in this subsystem, since $t$ would obtain a higher equity with the received penalty payment. In contrast to this, the unhappy penalty gadget ensures that the default of $v_0$ does provide a smaller-or-equal equity for each of the nodes.

Consider any parameters $\alpha, \beta<1$, and in terms of $\alpha$ and $h$, let us define a new parameter $b=\frac{h+5}{1-\alpha}$. The design of the unhappy penalty gadget requires us to add a CDS of weight $h$ towards a designated `semi-sink' node $t_0$, which has funds of $1$. However, we also add two further nodes $u$ and $t_0'$ to the gadget. Node $u$ has $b+1$ funds, a simple debt of weight $b$ to $t_0$, and an outgoing CDS of weight $2$ to $t_0'$, also in reference to $v_0$. Finally, $t_0$ has a simple debt of $b$ towards the sink node $t_0'$ (for a simpler analysis, we assume that $t_0'$ is not a general common sink in the system, but a distinct sink node specifically created for $v$; this does not affect our hardness result).

In this subsystem, if $v_0$ is not in default, then $u$ has no liability towards $t_0'$, and thus it is not in default; it pays its debt to $t_0$ and has an equity of $1$. Receiving this amount allows $t_0$ to pay its debt, thus also having an equity of $1$. As a result, the sink $t_0'$ receives a sum of $b$ in incoming payments from $t_0$.

On the other hand, if $r_{v_0}=0$, then $u$ has a total of $b+2$ liabilities, pushing it into default; thus, it can only use $\alpha \cdot (b+1)$ from its funds, paying $\frac{b}{b+2} \cdot \alpha \cdot (b+1) < \alpha \cdot (b+1)$ to node $t_0$ and $\frac{2}{b+2} \cdot \alpha \cdot (b+1) < 2$ to node $t_0'$. Even together with the sum of (at most) $h$ received from node $v$ as a penalty, this only gives total assets of less than $h+1+\alpha \cdot (b+1)$ for $t_0$. Note that our choice of $b$ ensures that $h+1+\alpha \cdot (b+1) < b$: since $b > \frac{h+2}{1-\alpha}$, we have
\[(1-\alpha) \cdot b > h+2 > h + 1 + \alpha, \]
implying
\[  b > h + 1 +\alpha + \alpha \cdot b = h+1+\alpha \cdot (b+1). \]
Thus node $t_0$ also cannot pay its liabilities in this case, and hence it is sent into default. This means that the sink $t_0'$ receives a payment of strictly less than $h+1+\alpha \cdot (b+1)$ from $t_0$, and together with the payment of strictly less then $2$ received from $u$, it has a total assets (and equity) of strictly less than $b$. This is again ensured by our choice of $b$: the fact that $b > \frac{h+4}{1-\alpha}$ implies 
\[ (1-\alpha) \cdot b > h+4 > h + 3 + \alpha,\]
which means that
\[  b > h + 3 +\alpha + \alpha \cdot b = h+1+\alpha \cdot (b+1) + 2. \]
Hence, $t_0'$ receives a total payment of strictly less than $b$ from the subsystem in this case.

Therefore, this latter solution is strictly worse for the whole subsystem: $v$'s equity is decreased due to the extra penalty of weight $h$, the nodes $v_0$, $u$ and $t_0$ have an equity of 0 now, and the equity of $t_0'$ is also smaller due to the smaller amount of incoming payments. Note that such a situation is only possible in systems with loss.

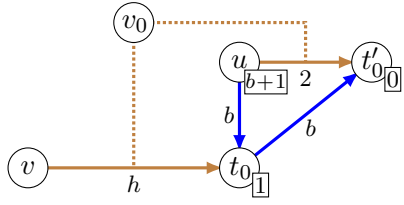
\begin{figure}
  \centering

\begin{tikzpicture}
	
	\draw[very thick, blue, arrows=-latex] (80pt,40pt) -- (80pt,6pt);
	\draw[very thick, blue, arrows=-latex] (80pt,0pt) -- (125.5pt,36.5pt);
	\draw[very thick, brown, arrows=-latex] (0pt,0pt) -- (74pt,0pt);
	\draw[very thick, brown, densely dotted] (40pt,55pt) -- (40pt,0pt);
	\draw[very thick, brown, arrows=-latex] (80pt,40pt) -- (124pt,40pt);
	\draw[very thick, brown, densely dotted] (40pt,55pt) -- (105pt,55pt) -- (105pt,40pt);
	
	\node[anchor=center] at (76pt,20pt) {\footnotesize $b$};
	\node[anchor=center] at (107pt,15pt) {\footnotesize $b$};
	\node[anchor=center] at (40pt,-6pt) {\footnotesize $h$};
	\node[anchor=center] at (105pt,34pt) {\footnotesize $2$};
	
	\draw[black, fill=white] (0pt,0pt) circle (1.6ex);
	\draw[black, fill=white] (80pt,0pt) circle (1.6ex);
	\draw[black, fill=white] (80pt,40pt) circle (1.6ex);
	\draw[black, fill=white] (40pt,55pt) circle (1.6ex);
	\draw[black, fill=white] (130pt,40pt) circle (1.6ex);
	
	\node[anchor=center] at (0pt,0pt) {\normalsize $v$};
	\node[anchor=center] at (80.5pt,0pt) {\normalsize $t_0$};
	\node[anchor=center] at (80pt,40pt) {\normalsize $u$};
	\node[anchor=center] at (40.5pt,54.5pt) {\normalsize $v_0$};
	\node[anchor=center] at (130.5pt,40pt) {\normalsize $t_0'$};
	
	\draw [fill=white] (82pt,37pt) rectangle (98pt,28pt);
	\node[anchor=center] at (90pt,32.5pt) {\scriptsize $b\!+\!1$};
	\draw [fill=white] (85pt,-2pt) rectangle (91pt,-11pt);
	\node[anchor=center] at (88pt,-6.5pt) {\footnotesize $1$};
	\draw [fill=white] (135pt,38pt) rectangle (141pt,29pt);
	\node[anchor=center] at (138pt,33.5pt) {\footnotesize $0$};
	
\end{tikzpicture}
	\caption{The unhappy penalty gadget}
	\label{fig:unhappy}
\end{figure}

\paragraph*{Proof of Theorem \ref{th:better}.}

Given the unhappy penalty gadget, we now describe the remaining details of the construction outlined in Section \ref{sec:better}. To avoid discussing infinite equities, we introduce a separate source node $s_0$ into this construction with $e_{s_0}=2$ only, and provide the incoming CDSs for node $w$ from this node.

As shown in Figure \ref{fig:subopt} of Section \ref{sec:better}, our construction is based on a pair of nodes $v_0$ and $v_0'$ with no funds and a debt of 1 to each other. Clearly the vectors $(0,0)$ and $(1,1)$ are solutions to this subsystem. In any other solution, both $v_0$ and $v_0'$ have to be in default, and thus any such other solution must satisfy $r_{v_0}=\beta \cdot r_{v_0'}$ and $r_{v_0'}=\beta \cdot r_{v_0}$. For any $\beta<1$ parameter, this only yields the solution $(0,0)$ again. Thus $v_0$ is indeed a binary node, and the subsystem acts as a different kind of branching gadget for the case of systems with loss.

Our construction for the theorem uses the base SAT construction, and then through unhappy penalty gadgets, it adds an arbitrarily large penalty to each node in the base construction with reference to $v_0$. Furthermore, the indicator nodes $v_{I}$ and $\overline{v_{I}}$ also receive such an unhappy penalty gadget with reference to $v_0$.

Finally, we have $w$ and $s_0$ in the construction, with $e_w=0$ and $e_{s_0}=2$. We add two distinct CDSs of weight $1$ from $s_0$ to $w$, one of them in reference to $v_0$, the other in reference to a node which indicates that $\phi$ is satisfied. Note that the illustration in Figure \ref{fig:subopt} is only a simplified sketch from this perspective; we cannot add this other CDS directly in reference to $\overline{v_I}$, since with $r_{v_0}=0$, the unhappy penalty gadgets do not ensure that $\overline{v_I}$ is a binary node, so this might provide some assets to $w$ even if $r_{v_0}=0$.

Instead, as a technical modification, we add an auxiliary node $z$ with $e_z=0$ and an incoming CDS in reference to the negation of $v_0$, and an outgoing CDS from $z$ to $w$ in reference to $\overline{v_I}$. This path of contracts provides no assets to $w$ if $r_{v_0}=0$. Note $q_z=0$ is ensured when $r_{v_0}=0$, and we can also ensure that the \textsc{not} gate attached to $v_0$ has no positive equity nodes for $r_{v_0}=0$ with an unhappy penalty gadget on its sink node.

In our reduction, the parameter solution $r$ is the one where $r_{v_0} = r_{v_0'}=0$, thus each node of the base construction is in default (with an equity of 0), and the nodes in the unhappy penalty gadgets are also not in a favorable state. Node $s_0$ has an equity of 1 in this solution. More importantly, node $w$ also has an equity of $1$, and thus any solution that is strictly better than $r$ must also have $q_w \geq 1$. Note that if $r_{v_0} = r_{v_0'}=0$ is fixed, then this is the only solution of the system.

Thus in any other solution, we must have $r_{v_0} = r_{v_0'}=1$. However, this implies that $w$ does not receive any payment through the CDS in reference to $v_0$. Hence a strictly better solution can only exist if it has $r_{\overline{v_{I}}}=0$ and thus $r_{v_I}=1$, i.e. if we find a satisfying assignment. Any such assignment indeed provides a strictly better solution: the nodes in the base construction cannot have less equity than 0, and the nodes in the unhappy penalty gadgets have strictly larger equities. Nodes $v_0$, $v_0'$ still have an equity of 0, and nodes $s_0$ and $w$ still have an equity of 1. Thus a strictly better solution than $r$ exists if and only if $\phi$ is satisfiable.

\section{Different optima for different objectives} \label{App:DiffOpt}

\subsection{Proof of Theorem \ref{obs:twoOpt}}

Let us first describe simple example systems that fulfill the properties outlined in Theorem \ref{obs:twoOpt}. For all pairs of objective functions $f_1$ and $f_2$, we apply a similar approach: we create a branching gadget to form two different solutions in the system, and we ensure that the optimum of $f_1$ is obtained when $r_x=1$, but on the other hand, choosing $r_y=1$ provides a much better solution in terms of $f_2$.

Let us now consider all the possible combinations of $f_1$ and $f_2$: 
\begin{itemize}
\item $f_1=$Min\textsc{Default}: for this case, we can simply use a bank $w$ with $e_w=0$ and an outgoing CDS in reference to $x$; this already ensures that $r_x=0$ results in a higher number of defaults than $r_x=1$.
\begin{itemize}
\item For $f_2=$Max\textsc{Prefer}, we add $\Theta(n)$ further nodes $u_i$ that have $e_{u_i}=2$, and an outgoing CDS of weight $1$ in reference to $y$. These new banks can never go into default (so they do not influence the optimum of Min\textsc{Default}), but if $r_y=0$, then their equities decrease from $2$ to $1$; as such $r_x=1$ gives a Max\textsc{Prefer} value of $\Theta(1)$, while $r_y=1$ gives a Max\textsc{Prefer} value of $\Theta(n)$.
\item For $f_2=$Min\textsc{Unpaid}, we add a single new node $u$ with $e_u=0$. This node will have an outgoing debt of $1$, and $\Theta(n)$ distinct outgoing CDSs of weight $1$, all in reference to $y$ (for this, we need to add $\Theta(n)$ distinct sinks to the system). Note that this node $u$ is in default in any case, so the optimum for Min\textsc{Default} is still obtained when $r_x=1$. However, now $r_y=1$ results in an unpaid debt of $O(1)$ altogether, while $r_y=0$ creates a total unpaid debt of $\Theta(n)$ in the system.
\end{itemize}
\item $f_1=$Max\textsc{Prefer}: let us choose a parameter $n'=\Theta(n)$, and add a large set of $n'$ nodes $w_i$ that all have $e_{w_i}=2$, and an outgoing CDS of weight $1$ in reference to $x$ (all going to the same sink $t$). This ensures that $r_x=1$ is the most preferred solution of at least $n'$ nodes.
\begin{itemize}
\item For $f_2=$Min\textsc{Default}, let us select a large constant $k$, and add $n'-k$ distinct nodes $u_i$ that have $e_{u_i}=1$ and an outgoing CDS of weight $2$ to the sink, in reference to $y$. Note that the system now consists of $n' + (n'\!-\!k) + O(1)$ nodes. If $r_x=1$, then $n'+O(1)$ banks are in default, but this is the primary preference of at least $n'$ nodes. On the other hand, if $r_y=1$, then only $O(1)$ banks are in default, but this solution is only preferred by $(n'\!-\!k)+O(1)$ banks. For a choice of a large enough constant $k$, this satisfies our requirements.
\item For $f_2=$Min\textsc{Unpaid}, it suffices to add a single bank $u$ with $e_u=0$, and $n'-k$ distinct outgoing CDSs of weight $1$ in reference to $y$, going to $n'-k$ distinct sink nodes (again for some large constant $k$). With $r_x=1$, the unpaid debt is $n'\!-\!k = \Theta(n)$, but this is the primary preference of at least $n'$ nodes. With $r_y=1$, the unpaid debt is only $O(1)$, but this solution is preferred by at most $(n'\!-\!k) + O(1)$ nodes.
\end{itemize}
\item $f_1=$Min\textsc{Unpaid}: we now use a bank $w$ with $e_w=0$ and some outgoing CDSs of weight $1$ in reference to $x$; however, the concrete number of these CDSs will now depend on our choice of $f_2$.
\begin{itemize}
\item For $f_2=$Min\textsc{Default}, we select a parameter $n'=\Theta(n)$, and add $n'$ outgoing unit-weight CDSs from $w$ (in reference to $x$). We then create $n'-k$ further nodes $u_i$ with $e_{u_i}=0$ and an outgoing CDS of weight $1$ in reference to $y$ (for some constant $k$). If $r_x=1$, this results in an unpaid debt of only $n'-k$, but yields $n'\!-\!k = \Theta(n)$ defaulting nodes. On the other hand, $r_y=1$ gives an unpaid debt of $n'$, but only results in $O(1)$ defaulting nodes.
\item For $f_2=$Max\textsc{Prefer}, we only add $k$ outgoing CDSs from $w$ (for some constant $k$), going towards $k$ distinct sink nodes. Besides this, we create $\Theta(n)$ banks $u_i$ that have $e_{u_i}=2$ and an outgoing CDS of weight $1$ in reference to $y$. With $r_x=1$, we now have $O(1)$ unpaid debts, but this is only the primary preference of $k+O(1)=O(1)$ nodes. With $r_y=1$, we have $k+O(1)$ unpaid debts, but this solution is preferred by $\Theta(n)$ nodes. This satisfies our requirements for a large enough constant $k$.
\end{itemize}
\end{itemize}

\subsection{A combined example}

We also show that we can merge these examples into a single construction that satisfies the properties outlined in Theorem \ref{obs:diffOpt}; this shows that it is even possible that all the optima are very far from each other simultaneously.

Furthermore, note that due to the CDSs in the network, the total amount of liabilities in the system may be drastically different in different solutions. Due to this, we also consider an alternative version of the Min\textsc{Unpaid} objective (termed Min\textsc{PropUnpaid}) in this example, where we minimize the proportion of unpaid liabilities compared to the total liabilities present in the system; i.e. we minimize $ ( \sum_{u,v \in B} l_{u,v} - p_{u,v} ) / ( \sum_{v \in B} l_v )$.

We also note that in contrast to our other results, this construction requires very large edge weights to create gaps between all pairs of functions, and allow a straightforward analysis at the same time. As such, this example does not generalize to the bounded edge-weight case in a trivial way. 

We prove Theorem \ref{obs:diffOpt} in the following form:

\begin{theorem*}
Let $h$ be an arbitrarily large number; for convenience, we assume $h=\omega(n)$. There exists a financial system with exactly four solutions $r_1$, $r_2$, $r_3$ and $r_4$, such that:
\begin{itemize}[topsep=4pt, itemsep=4pt, parsep=0pt]
 \item in terms of Min\textsc{Default}, $r_1$ is an $\Omega(\sqrt{n})$ factor better than $r_2$, $r_3$ and $r_4$,
 \item in terms of Max\textsc{Prefer}, $r_2$ is an $\Omega(\sqrt{n})$ factor better than $r_1$, $r_3$ and $r_4$,
 \item in terms of Min\textsc{Unpaid}, $r_3$ is an $\Omega(h)$ factor better than $r_1$, $r_2$ and $r_4$,
 \item in terms of Min\textsc{PropUnpaid}, $r_4$ is an $\Omega(h)$ factor better than $r_1$, $r_2$ and $r_3$.
\end{itemize}
\end{theorem*}

\renewcommand{\proofname}{Proof.}

\begin{proof}
The different parts of our proof construction are illustrated in Figure \ref{fig:diffOpt}. Creating a system that has exactly 4 solutions is straightforward: we use 2 branching gadgets that together provide 4 combinations of states. We can then use \textsc{and} gates to create four indicator binary nodes $u_1$, $u_2$, $u_3$, $u_4$ for each of these combinations. In each solution of the system, exactly one of the four nodes $u_i$ has $r_{u_i}=1$.

\begin{figure}[t]
\centering
	\resizebox{0.99\textwidth}{!}{\input{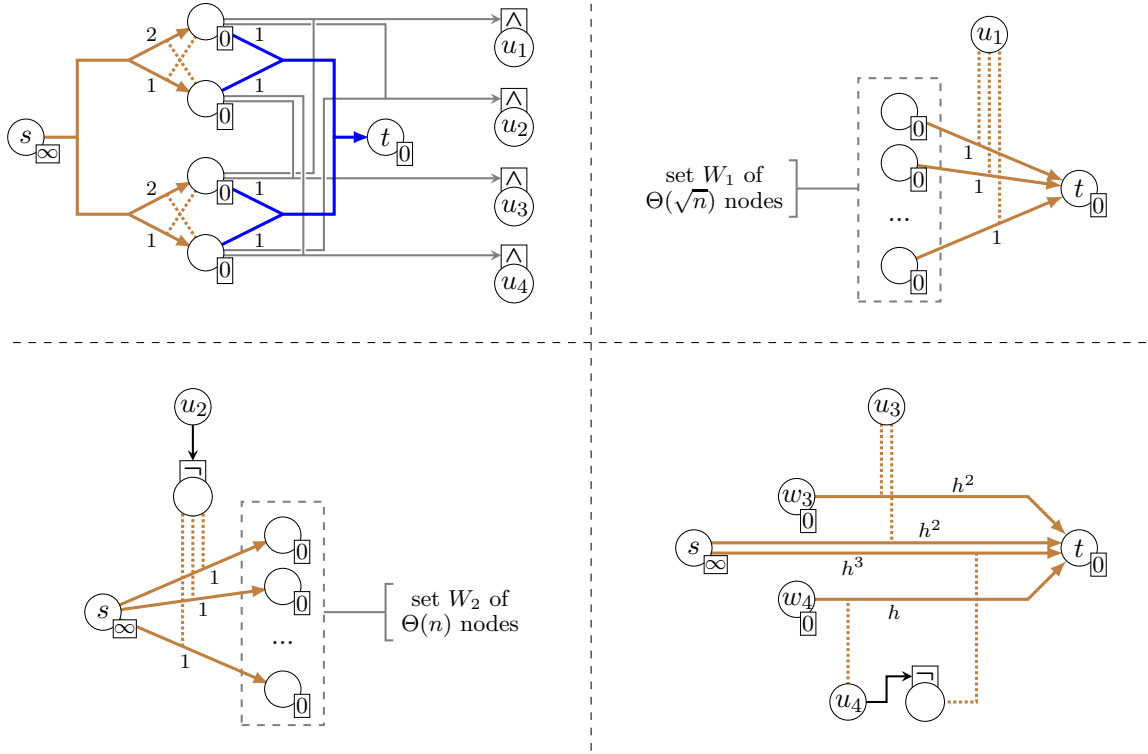}}
	\caption{Example system where the optima for different objective functions are realized in different solutions. For simplicity, the different parts of the construction are illustrated separately.}
	\label{fig:diffOpt}
\end{figure}

We then attach four different sets of nodes to the four indicator nodes in order to ensure that each solution has the desired properties.

For the case of $u_1$, we add a set $W_1$ of $\Theta(\sqrt{n})$ distinct nodes to the system, which all have 0 funds. From each of these nodes, we create a CDS to $t$ with a weight of 1, in reference to $u_1$. Thus if $u_1$ is chosen, then none of these $\Theta(\sqrt{n})$ nodes have any liabilities, and they all survive. On the other hand, if we choose any other solution, then these CDSs all incur liabilities, and hence our system has $\Theta(\sqrt{n})$ nodes in default. Besides $W_1$, the system will only have $O(1)$ nodes that can ever go into default, so in the solution where $r_{u_1}=1$, the number of defaulting nodes is only $O(1)$. Thus $u_1$ is indeed a factor of $\Omega(\sqrt{n})$ better than any other solution in terms of Min\textsc{Default}.

To ensure that $u_2$ is the first preference of $\Theta(n)$ nodes, we add a set $W_2$ of $\Theta(n)$ new nodes to the system, all with 0 funds. We then create a CDS from $s$ to each of these nodes with a weight of 1, in reference to the negation of $u_2$. If $u_2$ is chosen, then these $\Theta(n)$ nodes all have an equity of $1$; otherwise, their equity is 0. Since the rest of our system will only contain $O(\sqrt{n})$ nodes, this shows that selecting $u_1$ is the primary preference of the $\Theta(n)$ nodes in $W_2$, while all other solutions are the primary preference of at most $O(\sqrt{n})$ nodes, and thus $u_2$ is indeed an $\Omega(\sqrt{n})$ factor better in terms of Max\textsc{Prefer}.

Finally, we analyze the case of $u_3$ and $u_4$ together. For $u_3$, we only create one new node $w_3$ with $e_{w_3}=0$, and add two new CDSs to the system. Both of these CDSs are in reference to $u_3$, going to the sink $t$, and have a weight of $h^2$; one of them comes from $w_3$, the other comes from $s$. For $u_4$, we add a single node $w_4$ with $e_{w_4}=0$ again, and we create two new CDSs going to $t$. The first CDS comes from $w_4$, has a weight of $h$, and is in reference to $u_4$. The second CDS comes from $s$, has a weight of $h^3$, and is in reference to the negated version of $u_4$.

This means that if any solution other than $u_3$ is chosen, then we introduce $h^2$ paid and $h^2$ unpaid liabilities into the system. Similarly, if $u_4$ is chosen, then $h^3$ paid liabilities are introduced, but if $u_4$ is not chosen, then $h$ unpaid liabilities are introduced. In contrast to this, the CDSs based on $u_1$ and $u_2$ only result in $O(n)$ paid or unpaid liabilities, so since we assume $h=\omega(n)$, the total amount of liabilities is always determined by the CDSs of $u_3$ and $u_4$.

Let us analyze the total amount of paid and unpaid liabilities in all four solutions. If $u_1$ or $u_2$ is chosen, then the CDSs of $u_3$ ensure that there is a $\Theta(h^2)$ amount of both paid and unpaid liabilities in the system. If $u_3$ is chosen, then the amount of unpaid liabilities is only $\Theta(h)$, while the amount of paid liabilities is $O(n)$. Finally, if $u_4$ is chosen, the amount of unpaid liabilities is $\Theta(h^2)$, while the total amount of paid liabilities is $\Theta(h^3)$.

This shows that $u_3$ and $u_4$ indeed fulfill our requirements. The total amount of unpaid liabilities is $\Theta(h)$ in $u_3$, and $\Theta(h^2)$ in all other solutions, which is indeed a difference of a factor $\Omega(h)$ in terms of Min\textsc{Unpaid}. The rate of unpaid to total liabilities is a constant in $u_1$ and $u_2$, and asymptotically $1$ in case of $u_3$, but it is only $\Theta(\frac{1}{h})$ in case of $u_4$. Thus, $u_4$ is indeed a factor of $\Omega(h)$ better than all other solutions in terms of Min\textsc{PropUnpaid}.
\end{proof}

\section{Different model variants} \label{App:B}

\subsection{Systems with loss} \label{App:loss}

While our constructions were presented for $\alpha=\beta=1$, our results also hold for any $\alpha, \beta \in (0,1]$. Moreover, a different choice of $\alpha, \beta$ only requires minor modifications to our hardness proofs.

Firstly, note that the choice of $\alpha$ and $\beta$ only affects the behavior of nodes $v$ that have $0<r_v<1$. Since the majority of our gadgets work with binary nodes, they require no modification for any choice of $\alpha, \beta$. Recall that \textsc{and} gates can be replaced by combinations of \textsc{not} and \textsc{or} gates, and cutoff gadgets are not used in our constructions. The only building block we need to modify is the clean branching gadget: one can observe that a choice of $\delta_x=\frac{1}{\beta}$ and $\delta_y=1$ provides a similar tool of binary choice for any $\alpha, \beta$.

The nodes representing the objective functions in our constructions are also binary nodes, so they require no changes either. The only exception to this is the unhappy penalty gadget in Theorem \ref{th:better}, but this was already defined with respect to a specific $\alpha, \beta$ in a default cost setting.

\subsection{Unit-weight contracts} \label{App:bounds}

We now discuss how to adapt our results to the case when we are only allowed to use debts and CDSs of weight $1$. We have already noted that in most cases when the weight of a contract is not $1$ (but a larger integer $k$), we can usually split this into $k$ distinct contracts that come from/go to $k$ distinct source/sink nodes. In particular, we can apply this on the incoming CDS of $x$ in the clean branching gadget, or on node $v$ of the Max\textsc{Equity} reduction.

Note that if we also want to reduce $e_v$ in the Min\textsc{Equity} case to a constant, we can similarly do this by introducing $\Theta(n)$ new source nodes that are debtors of $v$.

Removing the infinitely large funds and weights in the construction of Theorem \ref{th:repr} is also straightforward, as we have already noted in Appendix \ref{App:repr}.

The only more involved case is Theorem \ref{th:better}, and in particular, the unhappy penalty gadgets. Note that in this gadget, for any integer weight $k$, we can again replace a contract of weight $k$ by $k$ new intermediate nodes: e.g. for the contract from $u$ to $t_0$, we create $b$ intermediate nodes with no funds that have a unit-weight incoming debt from $u$ and a unit-weight outgoing debt towards $t_0$. Thus in order to adapt the gadget to this setting, we only have to ensure that the parameters $h$ and $b$ are integers.

Recall that the choice of $h$ is entirely up to us: we just need to select it large enough such that it sends the corresponding node of the base construction into default, which can always be done with a constant integer value. On the other hand, the value of $b$ only needs to satisfy $b>\frac{h+4}{1-\alpha}$, so any integer value above this threshold suffices.

\subsection{The dependency graph} \label{App:depend}

In order to gain a deeper understanding of the relations between nodes, the authors of \cite{base1, base2} introduce the so-called \emph{colored dependency graph}. This graph is essentially a transformation of the financial system which removes the ternary relations (i.e. CDSs), and instead models the system as a simple directed graph with edges of two colors: red and green. Intuitively, a green edge from $u$ to $v$ means a long position, i.e. that it is better for $v$ if $u$ has a larger recovery rate. On the other hand, a red edge from $u$ to $v$ means a short position, i.e. that a smaller recovery rate at $u$ is more beneficial to $v$. We now outline the formal definition for the colored dependency graph; see \cite{base1} for more details.

\begin{figure}
\centering
\definecolor{dgreen}{RGB}{0,64,0}
\definecolor{dred}{RGB}{196,0,0}

\begin{tikzpicture}

	\draw[ultra thick, gray, arrows=-angle 60] (5pt,80pt) -- (5pt,60pt);
	\draw[ultra thick, gray, arrows=-angle 60] (125pt,80pt) -- (125pt,60pt);
	\draw[ultra thick, gray, arrows=-angle 60] (245pt,80pt) -- (245pt,60pt);
	
	\draw[very thick, dgreen, arrows=-latex] (-20pt,0pt) -- (23pt,0pt);
	\draw[very thick, dgreen, arrows=-latex] (100pt,0pt) -- (143pt,0pt);
	\draw[very thick, dgreen, arrows=-latex] (220pt,0pt) -- (263pt,0pt);
	
	\draw[very thick, blue, arrows=-latex] (-20pt,100pt) -- (23pt,100pt);
	\draw[very thick, brown, arrows=-latex] (100pt,100pt) -- (143pt,100pt);
	\draw[very thick, brown, arrows=-latex] (220pt,100pt) -- (263pt,100pt);
	\draw[very thick, brown, densely dotted] (125pt,143pt) -- (125pt,100pt);
	\draw[very thick, brown, densely dotted] (245pt,143pt) -- (245pt,100pt);
	
	\draw[very thick, blue, arrows=-latex] (245pt,143pt) -- (267pt,106pt);
	\draw[very thick, dgreen, arrows=-latex] (245pt,43pt) -- (267pt,6pt);
	\draw[very thick, dred, arrows=-{open triangle 45[length=1mm, width=0.5mm]}] (125pt,43pt) -- (147pt,6pt);

	\draw[very thick, dgreen, arrows=-latex] (245pt,43pt) -- (223pt,6pt);
	\draw[very thick, dgreen, arrows=-latex] (125pt,43pt) -- (103pt,6pt);
	
	\node[anchor=center] at (245pt,94pt) {\footnotesize $\delta_0$};
	\node[anchor=center] at (272pt,127pt) {\footnotesize $\delta \geq \delta_0$};
	
	\draw[black, fill=white] (-20pt,0pt) circle (1.6ex);
	\draw[black, fill=white] (30pt,0pt) circle (1.6ex);
	\draw[black, fill=white] (100pt,0pt) circle (1.6ex);
	\draw[black, fill=white] (150pt,0pt) circle (1.6ex);
	\draw[black, fill=white] (125pt,43pt) circle (1.6ex);
	\draw[black, fill=white] (220pt,0pt) circle (1.6ex);
	\draw[black, fill=white] (270pt,0pt) circle (1.6ex);
	\draw[black, fill=white] (245pt,43pt) circle (1.6ex);
	
	\draw[black, fill=white] (-20pt,100pt) circle (1.5ex);
	\draw[black, fill=white] (30pt,100pt) circle (1.5ex);
	\draw[black, fill=white] (100pt,100pt) circle (1.5ex);
	\draw[black, fill=white] (150pt,100pt) circle (1.5ex);
	\draw[black, fill=white] (125pt,143pt) circle (1.5ex);
	\draw[black, fill=white] (220pt,100pt) circle (1.5ex);
	\draw[black, fill=white] (270pt,100pt) circle (1.5ex);
	\draw[black, fill=white] (245pt,143pt) circle (1.5ex);
	
	\node[anchor=center] at (-20pt,0pt) {\normalsize $u$};
	\node[anchor=center] at (30pt,0pt) {\normalsize $v$};
	\node[anchor=center] at (100pt,0pt) {\normalsize $u$};
	\node[anchor=center] at (150pt,0pt) {\normalsize $v$};
	\node[anchor=center] at (220pt,0pt) {\normalsize $u$};
	\node[anchor=center] at (270pt,0pt) {\normalsize $v$};
	\node[anchor=center] at (125pt,43pt) {\normalsize $w$};
	\node[anchor=center] at (245pt,43pt) {\normalsize $w$};
	
	\node[anchor=center] at (-20pt,100pt) {\normalsize $u$};
	\node[anchor=center] at (30pt,100pt) {\normalsize $v$};
	\node[anchor=center] at (100pt,100pt) {\normalsize $u$};
	\node[anchor=center] at (150pt,100pt) {\normalsize $v$};
	\node[anchor=center] at (220pt,100pt) {\normalsize $u$};
	\node[anchor=center] at (270pt,100pt) {\normalsize $v$};
	\node[anchor=center] at (125pt,143pt) {\normalsize $w$};
	\node[anchor=center] at (245pt,143pt) {\normalsize $w$};

\end{tikzpicture}
	\caption{Illustration of the transformation into the dependency graph, in line with \cite{base1}. Green arrows (with filled arrowheads) express long positions, while red arrows (with empty arrowheads) express short positions.}
	\label{fig:depend}
\end{figure}
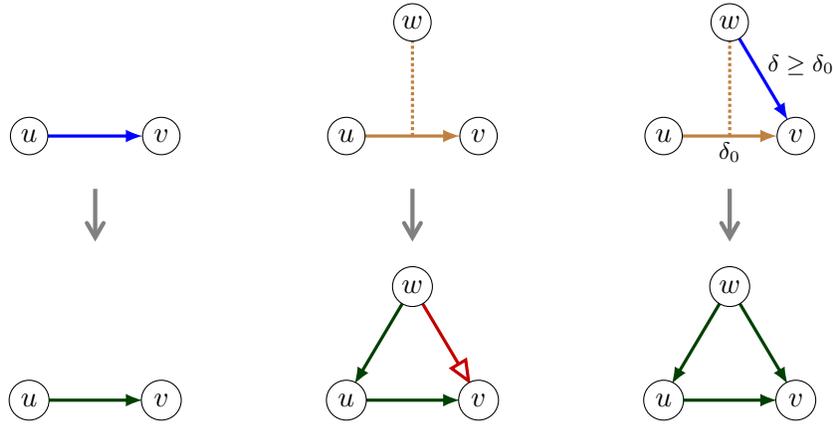

The dependency graph has the same node set as the original financial system. The edges of the dependency graph are then formed according to the following rules:
\begin{itemize}[topsep=3.5pt, itemsep=3.5pt, parsep=0pt]
 \item If there is a debt or CDS from $u$ to $v$, we draw a green edge from $u$ to $v$.
 \item If there is a CDS from $u$ in reference to $w$, we draw a green edge from $w$ to $u$.
 \item If the incoming CDSs of $v$ in reference to $w$ have a total weight of $\sum_{u \in B} \, l_{u,v}^w=\delta_0$ in the system, and we have $l_{w,v}<\delta_0$, we draw a red edge from $w$ to $v$.
\end{itemize}
An illustration of these rules, originally from \cite{base1}, is visible in Figure \ref{fig:depend}. The rule set provides a simple directed graph with two-colored edges. Note that given CDSs $l_{u_i,v}^w$ from nodes $u_i$, a red edge from $w$ to $v$ is only added if the CDSs together have larger weight than the debt from $w$ to $v$; otherwise, a higher $r_w$ value is more beneficial for $v$ altogether, so we call these CDSs \textit{covered}.

Based on the dependency graph, the work of \cite{base1} discusses $3$ restricted classes of financial systems that always guarantee the existence of a solution. The most general of these classes is the class that we refer to as \textit{RFC systems}, when no directed cycle in the dependency graph contains a red edge. This model already ensures the existence of a solution \cite{base1}, while still allowing reasonably rich behavior, so it might be of particular interest.

For the general intuition behind this restricted model, we can consider the strongly connected components (SCCs) of the dependency graph. Within each such SCC we only have green edges, which ensures that if the recovery rate outside this component is fixed, then there exists a maximal solution within this component. Given a topological ordering of SCCs, every component is only affected by the preceding components in the ordering. Therefore, we can iterate through the components in this order, and always select the maximal solution in the current component, considering the recovery rates of the preceding components to be already fixed.

\subsection{Hardness results for RFC systems} \label{App:rfc}

We now discuss the proof of Theorem \ref{th:rfc}, i.e. adapting our constructions to the restricted case of RFC systems.

We have already noted that the key observation is that, in fact, directed cycles are very rare in the dependency graphs of our constructions: we mostly use logical gates that follow a specific ordering, and thus the dependency graphs are already very close to DAGs. The only exceptions are branching gadgets, where $x$ and $y$ both have a CDS in reference to each other, and hence there is a red edge between them in both directions. Indeed, $x$ and $y$ clearly have a short position on each other.

Hence we only need to devise a branching gadget that has the same functionality, while also satisfying the RFC property. This modified branching gadget is illustrated in Figure \ref{fig:modBranch}. The gadget is based on two nodes that are connected as shown in Figure \ref{fig:cyc} earlier; we now term them $v_0$ and $v_0'$. These nodes have no funds, a debt of 1 to each other, and are not affected by any other banks. The solutions of this subsystem are exactly the clearing vectors with $r_{v_0}=r_{v_0'}=\rho$ for some $\rho \in [0,1]$.

The key idea behind our approach is to ensure that in any reasonable solution, this subsystem obtains either a very small or a very large $\rho$ value; then the value of $\rho$ being small or large indicates the binary choice that was previously represented by our branching gadgets.

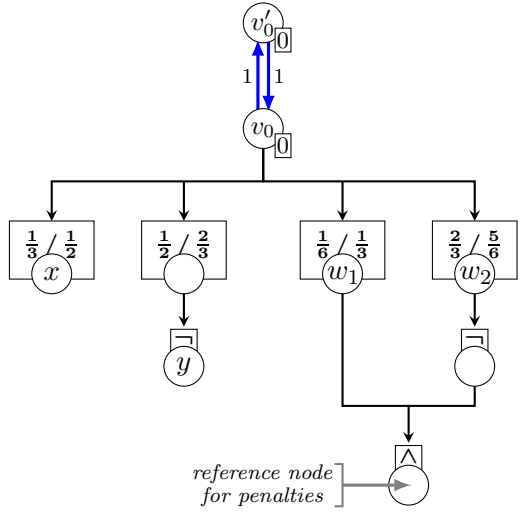
\begin{figure}[t]
\centering
	\definecolor{dgray}{gray}{0.35}

\begin{tikzpicture}

	\draw[thick, arrows=-stealth] (80pt,85pt) -- (80pt,65pt);
	\draw[thick, arrows=-stealth] (-30pt,85pt) -- (-30pt,65pt);
	\draw[very thick, blue, arrows=-latex] (-2pt,140pt) -- (-2pt,174pt);
	\draw[very thick, blue, arrows=-latex] (2pt,180pt) -- (2pt,146pt);
	
	\draw[thick] (30pt,85pt) -- (30pt,35pt) -- (55pt,35pt);
	\draw[thick, arrows=-stealth] (80pt,50pt) -- (80pt,35pt) -- (55pt,35pt) -- (55pt,21pt);
	
	\node[anchor=center] at (-6pt,160pt) {\scriptsize $1$};
	\node[anchor=center] at (6pt,160pt) {\scriptsize $1$};
	
	\draw[thick, arrows=-stealth] (0pt,140pt) -- (0pt,120pt) -- (80pt,120pt)  -- (80pt,105pt);
	\draw [fill=white] (64pt,105pt) rectangle (96pt,83pt);
	\node[anchor=center] at (80pt,96pt) {\scriptsize $\boldsymbol{\frac{2}{3}\,/\,\frac{5}{6}}$};
	\draw[black, fill=white] (80pt,85pt) circle (1.6ex);
	\node[anchor=center] at (80.5pt,84.5pt) {\normalsize $w_2$};
	
	\draw[thick, arrows=-stealth] (30pt,120pt)  -- (30pt,105pt);
	\draw [fill=white] (14pt,105pt) rectangle (46pt,83pt);
	\node[anchor=center] at (30pt,96pt) {\scriptsize $\boldsymbol{\frac{1}{6}\,/\,\frac{1}{3}}$};
	\draw[black, fill=white] (30pt,85pt) circle (1.6ex);
	\node[anchor=center] at (30.5pt,84.5pt) {\normalsize $w_1$};
	
	\draw [fill=white] (75pt,64pt) rectangle (85pt,52pt);
	\node[anchor=center] at (80pt,60pt) {\small $\boldsymbol{\neg}$};
	\draw[black, fill=white] (80pt,50pt) circle (1.6ex);
	
	\draw[thick, arrows=-stealth] (0pt,140pt) -- (0pt,120pt) -- (-80pt,120pt)  -- (-80pt,105pt);
	\draw [fill=white] (-64pt,105pt) rectangle (-96pt,83pt);
	\node[anchor=center] at (-80pt,96pt) {\scriptsize $\boldsymbol{\frac{1}{3}\,/\,\frac{1}{2}}$};
	\draw[black, fill=white] (-80pt,85pt) circle (1.6ex);
	\node[anchor=center] at (-80pt,85pt) {\normalsize $x$};
	
	\draw[thick, arrows=-stealth] (-30pt,120pt)  -- (-30pt,105pt);
	\draw [fill=white] (-14pt,105pt) rectangle (-46pt,83pt);
	\node[anchor=center] at (-30pt,96pt) {\scriptsize $\boldsymbol{\frac{1}{2}\,/\,\frac{2}{3}}$};
	\draw[black, fill=white] (-30pt,85pt) circle (1.6ex);
	
	\draw [fill=white] (-25pt,64pt) rectangle (-35pt,52pt);
	\node[anchor=center] at (-30pt,60pt) {\small $\boldsymbol{\neg}$};
	\draw[black, fill=white] (-30pt,50pt) circle (1.6ex);
	\node[anchor=center] at (-30pt,50pt) {\normalsize $y$};
	
	\draw [fill=white] (50pt,20pt) rectangle (60pt,9pt);
	\node[anchor=center] at (55pt,16pt) {\footnotesize $\boldsymbol{\wedge}$};
	\draw[black, fill=white] (55pt,5pt) circle (1.6ex);
	
	\draw[very thick, gray, arrows=-latex] (30pt,5pt) -- (56pt,5pt);
	\draw[thick, gray] (27pt,13pt) -- (30pt,13pt) -- (30pt,-3pt) -- (27pt,-3pt);
	\node[anchor=center] at (0pt,9.5pt) {\scriptsize \textit{reference node}};
	\node[anchor=center] at (0pt,0.5pt) {\scriptsize \textit{for penalties}};

	\draw[black, fill=white] (0pt,140pt) circle (1.6ex);
	\node[anchor=center] at (0pt,139.5pt) {\small $v_0$};
	\draw[black, fill=white] (0pt,180pt) circle (1.6ex);
	\node[anchor=center] at (0pt,180pt) {\small $v_0'$};
	
	\draw [fill=white] (4.5pt,138pt) rectangle (10.5pt,129pt);
	\node[anchor=center] at (7.5pt,133.5pt) {\footnotesize $0$};
	\draw [fill=white] (4.5pt,178pt) rectangle (10.5pt,169pt);
	\node[anchor=center] at (7.5pt,173.5pt) {\footnotesize $0$};

\end{tikzpicture}
	\caption{Modified branching gadget for RFC systems}
	\label{fig:modBranch}
\end{figure}

We ensure that in every reasonable solution, we have either $\rho \leq \frac{1}{3}$ or $\rho \geq \frac{2}{3}$ in this branching gadget. To achieve this, we append two cutoff gadgets to $v_0$, one with parameters $\eta_1=\frac{1}{6}$ and $\eta_2=\frac{1}{3}$, and the other with $\eta_1=\frac{2}{3}$ and $\eta_2=\frac{5}{6}$. Let the output nodes of these cutoff gadgets be denoted by $w_1$ and $w_2$, respectively. We then encode the condition $w_1$ \textsc{and} (\textsc{not} $w_2$) through logical gates, and if this condition holds, we introduce a very large penalty to our objective function (discussed in detail below).

This already ensures that we cannot have $\rho \in [\frac{1}{3}, \frac{2}{3}]$ in any reasonable solution. If an approximation algorithm would return such a solution, then we could easily improve the objective function value by simply selecting an arbitrary assignment of variables (that either satisfies $\phi$ or it does not), and consider this assignment instead, thus obtaining a strictly better approximation algorithm.

Finally, we add two further cutoff gadgets to node $v_0$, with parameters $\eta_1=\frac{1}{3}$, $\eta_2=\frac{1}{2}$ and parameters $\eta_1=\frac{1}{2}$, $\eta_2=\frac{2}{3}$. If we denote the output of the first gadget by $x$, and the negation of the output of the second gadget by $y$, then we can use $x$ and $y$ to represent a binary choice like our original branching gadget. These gadgets ensure that if $\rho \leq \frac{1}{3}$, then $r_{x}=1$ and $r_{y}=0$, and if $\rho \geq \frac{2}{3}$, then $r_{x}=0$ and $r_{y}=1$.

We note that when $\rho \in (\frac{1}{6}, \frac{1}{3})$ or $\rho \in (\frac{2}{3}, \frac{5}{6})$, then the cutoff gadgets do not ensure that $w_1$ and $w_2$ are binary nodes, and thus our penalty indicator node might also not be binary. This means that for this alternative branching gadget, some fraction of the penalty might already apply when $\rho \in (\frac{1}{6}, \frac{1}{3})$ or $\rho \in (\frac{2}{3}, \frac{5}{6})$. However, any such solution can be improved by selecting $\rho \leq \frac{1}{6}$ or $\rho \geq \frac{5}{6}$ instead, respectively; thus, we could even assume that reasonable solutions always have $\rho \notin (\frac{1}{6}, \frac{5}{6})$. Note that we are indirectly using two properties in this claim: (i) that the recovery rate of the cutoff gadget's output is monotonic in the input even on the interval $[\eta_1, \eta_2]$, and (ii) that the penalties we introduce for the specific objectives are also monotonic in the recovery rate of our penalty indicator node.

Also note that our alternative branching gadget becomes significantly simpler for the case of $\beta<1$. In this case, the only valid solutions to the subsystem consisting of $v_0$ and $v_0'$ are $(0,0)$ and $(1,1)$, so $v_0$ itself can already represent the binary choice without the cutoff gadgets.

\paragraph*{Penalty functions for the objectives}

For most of the objective functions we have studied, it is rather straightforward to add a very large penalty in case our restrictions on any of the alternative branching gadgets is violated. For convenience, we assume that the penalty indicators of the branching gadgets are first merged into a single node with an \textsc{or} gate, and then negate it; this node indicates whether all the branching gadgets are ``initialized'' properly.

For example, if we add $n$ new nodes with funds of $0$ and an outgoing CDS of weight $1$ (in reference to this penalty indicator), then this already suffices for the Min\textsc{Default} and Min\textsc{Unpaid} objectives. Any incorrect initialization of a branching gadget will then result in $n$ extra defaults and an extra unpaid debt of $n$, so any such solution is only improved if we replace it by an arbitrary variable assignment. If we change the funds of the nodes to $1$ and the CDS weight to $2$, then this also works for Min\textsc{LeastPrefer}. Note that given our original construction on $n$ nodes, we can indeed add $O(n)$ new nodes to this system without changing its basic properties: all functions values will still have the same magnitude compared to the network size.

For Min\textsc{Equity}, we can simply add a large-weight incoming CDS to $v$ if the penalty applies; the same approach also works with Min\textsc{Diff}.

For maximization problems, we use a different approach to avoid adding a value of $\Theta(n)$ to any solution. Here we simply add a large-weight outgoing CDS to every bank of our original system, in reference to the penalty node. This is sufficient for both Max\textsc{Equity} and Max\textsc{Prefer}, as well as for Max\textsc{Surviving}.

Max\textsc{Paid} requires a slightly different approach: we consider the extra nodes $u_i$ that implement the objective function, and instead of giving them funds of $1$, we give them an incoming CDS of weight $1$ in reference to the negation of the penalty indicator node. If the constraints are violated, then even if we find a satisfying assignment of $\phi$, this results in no paid liabilities in the network, since the banks $u_i$ have no assets at all. As such, if $\phi$ is satisfiable, then an approximation algorithm must return a solution where the branching gadgets are initialized properly.

In case of Theorem \ref{th:repr}, we add a further, even larger control group where the recovery rates are always $1$, unless the penalty indicator is set to true; thus any arbitrary assignment of $\phi$ is more optimal than violating a constraint.

In case of Theorem \ref{th:better}, we add an outgoing CDS from $w$ with weight $2$ in reference to the penalty node.

\subsection{Green systems and regularity}

As discussed before, the alternative branching gadget in Theorem \ref{th:rfc} uses a debt-only network with multiple solutions to create a large solution space. As such, we now take a detour to study such systems, or more generally, any systems with only long positions (i.e. greens systems). Recall that a green system can only contain simple debts and so-called covered CDSs (see the definition of the dependency graph). Theorem \ref{th:ambiguity} shows that in these systems, default ambiguity can only happen in very special cases.

Before the proof, we note that one might wonder if the second observation in Theorem \ref{th:ambiguity} can also be phrased with a path of \textit{green edges} instead of a path of contracts. However, this is not the case. Essentially, the property we require for the proof is that $a_{u}>0$ already implies $a_{v}>0$; this indeed holds if we have a debt or a covered CDS from $u$ to $v$. On the other hand, a green edge from $u$ to $v$ can also be present because $v$ has an outgoing CDS in reference to $u$ (see the second point in the definition of the dependency graph), which does not satisfy this property.

The early work of Noe has already studied default ambiguity in debt-only networks, showing that the solution is unique if from any bank there is a directed path to another bank with positive funds \cite{model1}. We now prove a more general version of this result, generalizing the theorem to any green system. In particular, we show that green systems can only have multiple solutions in a special edge case: when we essentially have a strongly connected component of positive liabilities, with no funds and no incoming assets at all.

The first step of the proof is to note that a green system ensures the existence of a maximal and minimal solution. This has already been proven in the work of \cite{base1}: intuitively, the payment functions are monotonous due to the long positions, so one can use the Knaster-Tarski fixed point theorem to prove that a maximal solution $r$ and minimal solution $r'$ exists. This maximal (minimal) solution assigns the highest (lowest) recovery rate to every bank simultaneously.

We first outline the rest of the proof for debt-only networks, and then we separately discuss the changes required for the case of green systems. The proof for debt-only networks is basically the same as in the analysis of Eisenberg and Noe \cite{model1}, extended with some further observations.

\renewcommand{\proofname}{Proof for debt-only networks.}

\begin{proof}
Using $r$ and $r'$, one can first show that any bank $v$ must have $q_v=q_v'$ in these two solutions $r$ and $r'$. Intuitively, since the sum of equity is fixed in any solution (see Lemma \ref{ops:pareto}), if a bank had lower equity in $r'$ than in $r$, then another bank would need to have a larger equity in $r'$ than in $r$, which is a contradiction. Since the equity of each bank $v$ is maximized in $r$ and minimized in $r'$, this also implies that $v$ has the same equity in every solution of the system.

Now let us assume that a bank $v_0$ has two different recovery rates in two solutions of the system; since $r$ and $r'$ contain the maximal and minimal recovery rates for $v_0$, this also implies $r_{v_0}>r_{v_0}'$. As such, it is enough to show that our observations hold if $r_{v_0}>r_{v_0}'$.

Since $v_0$ is still in default in the solution $r'$, there will be strictly more payment on the outgoing debts of $v_0$ in $r$. As the system is loss-free, each defaulting bank will relay these extra payments, and they will traverse the network until either reaching a bank $u$ with $r_u'=1$, or arriving back at $v_0$. However, the former is also not possible: if a bank with $r_u'=1$ receives more incoming payment, then its equity strictly increases, which contradicts the previous observations.

This means that every directed path of debts starting from $v_0$ must lead back to $v_0$, implying that the nodes reachable from $v_0$ form an SCC $C$. Moreover, note that it must also hold that each bank $u \in C$ has $q_u=0$ in both solutions. This also implies that $e_u=0$ for any $u \in C$: since no outgoing payment leaves the SCC, we would otherwise have a positive equity in at least one of the banks in $C$ (again due to the argument in Lemma \ref{ops:pareto}). Similarly, having $e_w>0$ at a bank $w$ with a path of debts to $C$ implies that a positive amount of assets arrive in $C$, so again at least one bank in $C$ would have a positive equity.
\end{proof}

Hence for debt-only networks, the nodes reachable from $v_0$ form an SCC $C$ with no funds and no incoming assets at all.

\renewcommand{\proofname}{Proof for green systems.}

\begin{proof}
Adapting the same argument to green systems is not straightforward, since we also have CDS contracts, and these might carry less payment in $r$ than in $r'$ if the reference entity has a higher recovery rate in $r$. As such, it is not immediately clear that a higher $r_u$ value at $u$ always translates to strictly more payment for the creditors of $u$.

This could be a problem for our proof: if there are funds in $C$ that indirectly provide an equity to a node $v$, but $v$ does not receive extra payments in $r$, then we do not have our previous contradiction. As such, the technical part of the proof is to prove the following claim in green systems.

\begin{lemma} \label{lem:tech}
If $r_u>r_u'$ and $p_{u,v}'>0$ in $r'$, then $a_v>a_v'$. 
\end{lemma}

Since we have $l_v \leq l_v'$ for any liability, $a_v>a_v'$ will then imply $r_v > r_v'$ if $v$ is still in default in $r'$.

We can then again use $C$ to denote the set of banks that are reachable from $v_0$ on a path of contracts that all have positive payment in $r'$. As before, an inductive argument shows that for any node $u \in C$ we must have $a_u>a_u'$.

This once again ensures that each bank $u \in C$ has $e_u=0$: otherwise, since there are no outgoing payments from $C$ in $r'$, we would have a bank $v$ in $C$ with positive equity $q_v'$. Then $a_v>a_v'$ would imply $q_v>q_v'$, which is a contradiction. Similarly, having $e_w>0$ at a bank $w \notin C$ but with a path of contracts to $C$ would imply that a positive amount of assets arrive in $C$, so again at least one bank in $C$ would have a positive equity.
\end{proof}

Recall that for debt-only networks, we have also noted that the nodes reachable from $v_0$ form an SCC. Making an analogous statement is not so straightforward in this case, since the same claim only holds for the contracts that have a positive liability; e.g. if there is a CDS from $v_0$ to a bank $u$ such that the liability on the CDS is $0$ in any solution, then no restrictions follow for $u$. As such, in this case, it is not straightforward to find the component $C$ that is affected by the extra payments; we have to compute $r'$ in order to do so. Due to this, we have limited the scope of Theorem \ref{th:ambiguity} to nodes reachable on debt contracts.

Altogether, the solution of a green system is known to be unique if we ensure that for every bank $v_0$, there is either (i) a bank $u$ reachable from $v_0$ on a path of debt contracts, with $e_u>0$, or (ii) a bank $w$ with $e_w>0$ such that $v_0$ is reachable from $w$ on a path of contracts of any kind.

Regularity is the simplest way to ensure this property, but we can also come up with weaker conditions instead: e.g. we could specifically say that a system is \emph{path-regular} if for all $v \in B$, there exists a bank $w$ with $e_w>0$ such that there is a path of contracts from $w$ to $v$ in $G$. However, enforcing this would be much more difficult from a regulator's perspective.

Finally, let us discuss the proof of our technical lemma.

\renewcommand{\proofname}{Proof of Lemma \ref{lem:tech}.}

\begin{proof}
We first create an auxiliary network that separates the effects of each covered CDS contract. Given a CDS $\delta_{u,v}^w$, we introduce a fictitious node $z$ for this specific CDS: we (i) set $z$ to be the new creditor of the CDS $\delta_{u,v}^w$, (ii) add a debt of weight $\delta_{u,v}^w$ from $w$ to $z$, (iii) decrease the debt from $w$ to $v$ by $\delta_{u,v}^w$, and (iv) introduce an infinite liability from $z$ to $v$. This bank $z$ essentially captures the payments we attribute altogether to the CDS and the ``part'' of the debt from $w$ to $v$ that covers it. The payment is visible as $p_{z,v}$. Note that each CDS is covered in our green system, so we can introduce such an auxiliary node for all incoming CDSs of $v$ in reference to $w$, and the weight of the debt from $w$ to $v$ will still remain non-negative.

Now let us begin the proof of $a_v>a_v'$. First, note that since payments on each debt contract are monotonic in the recovery rates, each incoming debt of $v$ has at least as much payment in $r$ as in $r'$. Also, for a covered CDS $\delta_{u,v}^w$, if $r_u \geq r_u'$ and $r_w \geq r_w'$, then we have $p_{z,v} \geq p_{z,v}'$ from the corresponding auxiliary node $z$. This already implies $a_v \geq a_v'$. It only remains to show that if $r_u>r_u'$ and $p_{u,v}'>0$, then the contracts from $u$ to $v$ indeed carry strictly more payments, so we have $a_v>a_v'$.

If the payment from $u$ to $v$ happens (partially) on a simple debt contract, then this is straightforward: $r_u>r_u'$ implies a strictly higher payment on any such contract.

If the payment happens on a CDS in reference to $w$, and the payment becomes strictly larger in the solution $r$ (e.g. because $r_w=r_w'$), then we are again finished. This leaves the more involved case when the payment on this CDS does not increase: this can indeed happen if we have $r_w>r_w'$, and thus the liability on the CDS is smaller in $r$.

Hence assume that we have $r_u \cdot \delta_{u,v}^w \cdot (1-r_w) \leq r_u' \cdot \delta_{u,v}^w \cdot (1-r_w')$. First note that since $r_u>r_u'$, this implies $r_w>r_w'$. Let $z$ denote the auxiliary node for this CDS; we need to show that even though the payment on the CDS did not increase, we still have $a_z>a_z'$. For this, we need to show for the sum of payments that
\[ r_u \cdot \delta_{u,v}^w \cdot (1-r_w) + \delta_{u,v}^w \cdot r_w > r_u' \cdot \delta_{u,v}^w \cdot (1-r_w') + \delta_{u,v}^w \cdot r_w'. \]
The definition of $z$ allows us to conveniently remove the coefficients $\delta_{u,v}^w$ from each term. We expand the brackets to obtain
\[ r_u +  r_w - r_u \cdot r_w > r_u' +  r_w' - r_u' \cdot r_w'. \]
Adding $-1$ and reorganizing provides $(1-r_u) \cdot (1-r_w) < (1-r_u') \cdot (1-r_w')$, which indeed holds since $r_u>r_u'$ and $r_w>r_w'$.
\end{proof}

\subsection{Hardness results for regular systems}

We now discuss the proof of Theorem \ref{th:reg}, i.e. that our hardness results carry over to the case when each bank $u$ has $e_u>0$.

\renewcommand{\proofname}{Proof of Theorem \ref{th:reg}.}

\begin{proof}
We have already outlined that the main idea is to consider a new pair of binary states at a bank $v$: we still have $r_v=1$ as one of the two states, but now the other state will be represented by $r_v=\frac{1}{2}$. This will allow us to give some funds to every node in our construction, thus fulfilling the regularity condition.

In fact our main building blocks are surprisingly easy to adapt to this setting: one can observe that by setting $e_x=e_y=0.5$ in the clean branching gadget, setting $e_w=0.5$ in the \textsc{not} gate and also $e_w=0.5$ in the \textsc{or} gate, all of these gadgets will exhibit the same functionality as before (assuming, of course, that the inputs now also follow this new binary representation). Any \textsc{and} gates can be replaced by a combination of \textsc{not} and \textsc{or} gates, and the cutoff gadgets are not used in the base versions of our hardness proofs.

As such, these simple changes already allow us to adapt the base construction to the case of regular systems. It remains to discuss the modifications in the rest of the system for each of our hardness results.

For both of the equity objectives, we can simply set $e_v=1$ and add a new outgoing debt of weight $1$ from $v$; this does not affect the equity of $v$ in any solution. This also settles the reduction for Min\textsc{Diff}, where we execute this for both $v_1$ and $v_2$.

In the construction used for Min\textsc{Default}, Max\textsc{Surviving} and Min\textsc{Unpaid}, we can simply set the funds of the extra nodes to $e_{u_i}=0.5$; this still ensures that the attached CDS will result in defaults and unpaid debts. On the other hand, the extra nodes in the constructions for Max\textsc{Prefer}, Min\textsc{LeastPrefer} and Max\textsc{Paid} already satisfy regularity.

Adapting the system in Theorem \ref{th:repr} is a more difficult task. Fortunately, the nodes in the control group do not cause any problem in this setting: we can change their funds to some small value, e.g. $e_{u_i}=0.1$, so their recovery rates in the two cases will be $r_{u_i}=0.1$ and $r_{u_i}=1$. This still ensures that an unsatisfying assignment results in a total distance of at least $0.9 \cdot 2^{N^2} \cdot m^{1/\epsilon'}$, and thus we can apply the same argument as before.

On the other hand, the branching gadgets in the generating group require more attention. Note that we only want these gadgets to introduce a binary choice when their source nodes receives funds, but otherwise, we want them to have a single solution only. One can show that this happens if we provide e.g. $\frac{1}{4}$ funds to their source node: a clean branching gadget with $e_s=\frac{1}{4}$ and $e_x=e_y=\frac{1}{2}$ indeed has only one solution (at $r_x=\frac{1}{4}+\sqrt{3}/4$, $r_y=1-\sqrt{3}/4$). On the other hand, if the source node receives $2$ more assets, then it behaves like a regular branching gadget as discussed above.

Theorem \ref{th:better} is again a more involved case, since we have to adapt our gadgets to the case of systems with loss. We discuss this construction for $\alpha=\beta=0.5$. One can observe that we can again use the original clean branching gadget, \textsc{not} gate and \textsc{or} gate with minor modifications: we now set the funds of the corresponding nodes to $\frac{2}{3}$ instead of $\frac{1}{2}$. This way, when the nodes are in default, they only have a recovery rate of $\frac{1}{3}$ after applying default costs, so in this case, our binary states will be represented by recovery rates of $1$ and $\frac{1}{3}$.

Adapting the rest of the construction is a simpler task: we can set $e_{v_0}=e_{v_0'}=\frac{1}{3}$ to ensure that the subsystem has two solutions $r_{v_0}=r_{v_0'}=1$ and $r_{v_0}=r_{v_0'}=\frac{1}{3}$, as discussed at the end of Section \ref{sec:models}. We also set $e_w=1$. Note that in the unhappy penalty gadgets, every node (except for the sink) has funds already, so this requires no major modification; we only need to scale up the weights of the CDSs to account for the fact that the lower binary state is now represented by $r_{v_0}=\frac{1}{3}$ instead of $r_{v_0}=0$.
\end{proof}

\end{appendices}

\end{document}